\theoremstyle{plain}
\newtheorem{theorem}{Theorem}
\newtheorem{lemma}{Lemma}
\newtheorem{definition}{Definition}
\newtheorem{fact}{Fact}
\newtheorem{proposition}{Proposition}
\theoremstyle{definition}
\newcommand{\Omit}[1]{}
\newcounter{Codeline}
\newcommand{\vecb}[1]{\boldsymbol{#1}}
\newcommand{\trih}[1]{\ensuremath{{T}_{#1}(G)}}
\newcommand{\multri}[1]{\ensuremath{\#(#1)}}
\newcommand{\DD}{\ensuremath{\vecb{T}_{w(\vecb{T})}}}
\newcommand{\X}{X}
\newcommand{\mydelta}{\Delta(\X)}
\newcommand{\myS}[1]{\mathcal{S}_U^{\X}(#1)}
\providecommand{\m}{r} 
\newcommand{\myT}[1]{\mathcal{V}_{U,\m}^{\X}(#1)}
\newcommand{\E}{\mathcal{E}}
\newcommand{\Pp}{\mathcal{P}}
\newcommand{\ceil}[1]{\left\lceil #1 \right\rceil}
\providecommand{\Aa}{\mathcal{A}}
\providecommand{\Hh}{\mathcal{H}}
\providecommand{\Ii}{\mathcal{I}}
\newcommand{\Hline}[1]{\noalign{\hrule height #1}} 
\begin{document}

\title{Triangle Finding and Listing in CONGEST Networks}
\author{Taisuke Izumi\footnote{Postal address: Nagoya Institute of Technology, Gokiso-cho, Showa-ku, Nagoya, Aichi, 466-8555, Japan. Tel: (+81)-52-735-5567.}\\
Graduate School of Engineering\\
Nagoya Institute of Technology\\
\url{t-izumi@nitech.ac.jp}
\and
Fran{\c c}ois Le Gall\footnote{Postal address: Department of Communications and Computer Engineering,
Graduate School of Informatics, Kyoto University, Yoshida-Honmachi, Sakyo-ku, Kyoto 606-8501, Japan. Tel: (+81)-75-753-5382.}
\\Graduate School of Informatics\\
Kyoto University\\
\url{legall@i.kyoto-u.ac.jp}
}
\date{}

\maketitle
\thispagestyle{empty}
\setcounter{page}{0}
\begin{abstract}
Triangle-free graphs play a central role in graph theory, and triangle detection (or triangle finding) as well as triangle enumeration (triangle listing) play central roles in the field of graph algorithms. In distributed computing, algorithms with sublinear round complexity for triangle finding and listing have recently been developed in the powerful CONGEST clique model, where communication is allowed between any two nodes of the network. In this paper we present the first algorithms with sublinear complexity for triangle finding and triangle listing in the standard CONGEST model, where the communication topology is the same as the topology of the network. More precisely, we give randomized algorithms for triangle finding and listing with round complexity $O(n^{2/3}(\log n)^{2/3})$ and $O(n^{3/4}\log n)$, respectively, where $n$ denotes the number of nodes of the network. We also show a lower bound $\Omega(n^{1/3}/\log n)$ on the round complexity of triangle listing, which also holds for the CONGEST clique model. 
\end{abstract}
\newpage

\section{Introduction}
\paragraph{Background.}
The most standard way of modeling distributed networks is to use graphs, where vertices and edges respectively correspond to computing entities (e.g., nodes or processes) and communication channels. The graphical structure of networks naturally motivates, both practically and theoretically, the study of graph algorithms, i.e., in-network solving of combinatorial graph problems for the own topology of the network.  In the last twenty years, research on distributed complexity theory has successfully investigated many fundamental graph problems such as independent set~\cite{ABI86,BEPS16}, dominating set~\cite{FR05,KMW16}, coloring~\cite{Linial92,BE13}, minimum spanning trees~\cite{GHS83,Elkin04,Elkin06,GKP93,PR99,LPPP05} shortest paths~\cite{LP13,LP15,Nanongkai14,HW12}, max-flow and min-cut~\cite{GK13,NS14,GKKLP15}. 
The seminal textbook by Peleg~\cite{Peleg00} provides a good overview of those achievements.

A high-level perspective on distributed graph algorithms classifies problems into two categories according to their locality characteristics. Problems like minimum spanning trees, shortest paths or max-flow/min-cut mentioned above are classified as global problems because some node must (indirectly) communicate with a node $D$-hop away, where $D$ denotes the diameter of the network, which requires $\Omega(D)$ rounds. Since any problem can be solved within $O(D)$ rounds by a naive centralized approach if the communication bandwidth of each channel is unlimited, technical challenges on global problems appear only in the model of limited-bandwidth channels, the so-called CONGEST communication model where each channel has only $O(\log n)$-bit bandwidth. On the other hand, many local problems, like independent set, dominating set or coloring allow distributed (approximate) solutions within $o(D)$ rounds. Besides the limited bandwidth, a significant difficulty when designing distributed algorithms for local problems often consists in breaking the symmetry between nearby nodes.

\paragraph{Triangle finding.}
Triangle finding is an important counterpart of the perspective explained above. Since the distance-two information for a node $i$ is sufficient to find all triangles around~$i$, it is a purely local task. Aggregating at each node the set of its 2-hop neighbors, however, takes $\Theta(d_{\mathit{max}})$ rounds in the CONGEST model, where $d_{\mathit{max}}$ denotes the maximum degree of the nodes. In the case of dense graphs, such an approach gives linear round complexity. In this sense, despite being a local problem, the underlying difficulty of triangle finding is similar to the difficulty encountered when working with global problems in the CONGEST model, that is, the lack of communication bandwidth. These particularities make triangle finding a difficult problem to handle with current techniques. As summarized in Table~\ref{table:comparison}, recently some non-trivial upper bounds have been obtained by Dolev et al.~\cite{DLP12} and then improved by Censor-Hillel et al.~\cite{CKKLPS15}, but these upper bounds hold only in the much stronger CONGEST clique model (where at each round messages can even be sent between non-adjacent nodes) for which bandwidth management is significantly easier. No nontrivial upper bound is known in the standard CONGEST model.
As far as lower bounds are concerned, the locality of triangle finding rules out many standard approaches used to obtain lower bounds on the complexity of global problems  (e.g., the approaches from \cite{SHKKANPPW12,FHW12}). The only known lower bounds have been obtained recently by Drucker et al.~\cite{DKO14}, but hold only for the much weaker broadcast CONGEST model, where at each round the nodes can broadcast only a single common message to all other nodes, under a (reasonable) conjecture in communication complexity theory. No nontrivial lower bound is known for triangle finding in the standard CONGEST model.

Another, more practical, motivation for considering triangle finding
is that for several graph problems faster algorithms are known over triangle-free graphs (we refer to \cite{Hirvonen+14,Pettie+15} for some examples in the distributed setting). The ability to efficiently check if the network is triangle-free, and more generally detect which part of the network is triangle-free, is essential when considering such algorithms in practice. 

\setlength{\extrarowheight}{1.5pt}
\begin{table*}[tb]
 \begin{center}
  \begin{tabular}{cccc}
  \Hline{1.5pt}
   Paper & Time Bound & Problem & Model \\ \Hline{1.5pt}
   \multirow{2}{*}{Dolev et al.~\cite{DLP12}} & $O(n^{1/3}(\log n)^{2/3})$& \multirow{2}{*}{Listing}&\multirow{2}{*}{CONGEST clique}\bigstrut\\
   &$O(d_{\mathit{max}}^{3}/n)$ && \\ \hline
   Censor-Hillel et al.~\cite{CKKLPS15} & 
   $O(n^{0.1572})$ & Finding 
   & CONGEST clique \bigstrut\\ \hline
   This paper (Theorem \ref{th:UB-finding}) & $O(n^{2/3}(\log n)^{2/3})$ & Finding & CONGEST \bigstrut\\ \hline 
   This paper (Theorem \ref{th:UB-listing})& $O(n^{3/4}\log n)$ & Listing & CONGEST \bigstrut\\ \Hline{1.5pt}

   Drucker et al.~\cite{DKO14} & 
   $\Omega\big(\frac{n}{e^{\sqrt{\log n}}\log n}\big)$ (conditional)
   & Finding & CONGEST broadcast \bigstrut\\ \hline
   Pandurangan et al.~\cite{Pandurangan+16}& $\Omega(\frac{n^{1/3}}{\log^3 n})$ & Listing & CONGEST clique \bigstrut\\ \hline
   This paper (Theorem \ref{thm:lowerBound})& $\Omega(\frac{n^{1/3}}{\log n})$ & Listing & CONGEST clique \bigstrut\\ \Hline{1.5pt}
  \end{tabular}
  
   \caption{Prior results on the round complexity of distributed triangle finding and listing and our new results, where $n$ denotes the number of nodes in the network.}
  \label{table:comparison}  
 \end{center}
\end{table*}

\paragraph{Technical contribution.}
In this paper, in addition to triangle finding we also consider the triangle listing problem, which requires that each triangle of the network should be output by at least one node (we refer to Section \ref{sec:prelim} for the formal definition). The triangle listing problem can be seen as a special case of motif finding, which is a popular problem in the context of network data analysis. It is obvious that triangle finding is not harder than triangle listing, since finding reduces to listing.
In the literature mentioned above, triangle finding and triangle listing are not explicitly distinguished, but the algorithms by Dolev et al.~\cite{DLP12} are actually for the listing version, while the algorithms by Censor-Hillel et al.~\cite{CKKLPS15} are based on an algebraic approach and only support finding. The lower bound by Drucker et al.~\cite{DKO14} applies to triangle finding. 

Our results are summarized in Table~\ref{table:comparison}. Our main contributions are a $O(n^{2/3}(\log n)^{2/3})$-round algorithm for triangle finding and a $O(n^{3/4}\log n)$-round algorithm for triangle listing, both in the CONGEST model. These two algorithms are the first algorithms with sublinear round complexity for triangle finding or listing applicable to the standard CONGEST model. The existence of sublinear-time algorithms shows, from an algorithmic perspective as well, that these two problems are indeed easier than global problems like diameter computation, which has a nearly linear-time lower bound in the CONGEST model~\cite{FHW12}. Interestingly, this result contrasts with the known relations between triangle finding and diameter computation in the centralized setting, in which the best known algorithms for both problems rely on matrix multiplication and have the same complexity. 

We also show that there exist networks of $n$ nodes for which any triangle listing algorithm requires $\Omega(n^{1/3}/\log n)$ rounds in the CONGEST model. Actually we show a stronger result: any triangle listing algorithm requires $\Omega(n^{1/3}/\log n)$ rounds even in the CONGEST clique model. This lower bound slightly improves the $\Omega(n^{1/3}/\log^3 n)$-round lower bound for triangle listing in the CONGEST clique model obtained recently by Pandurangan et al.~\cite{Pandurangan+16}. Note that since the algorithm by Dolev et al.~\cite{DLP12} has round complexity $O(n^{1/3}(\log n)^{2/3})$, these lower bounds are tight up to possible polylogarithmic factors in the CONGEST clique model. These lower bounds also have the following interesting consequence. They shows that triangle listing is strictly harder than triangle finding in the CONGEST clique model since, as already mentioned, $(n^{0.1572})$-round algorithms for triangle finding have been shown by Censor-Hillel et al.~\cite{CKKLPS15}. Actually, since the algorithms in~\cite{CKKLPS15} also count the number of triangles, these lower bounds thus even show that triangle listing is harder than counting in this model. 

Our lower bound is shown by a simple information-theoretic argument on random graphs. By similar arguments (see Proposition~\ref{cor} in Section~\ref{sec:LB}), we can also show a $\Omega(n/\log n)$-round lower bound for local listing algorithms, where each node~$i$ is required to output all the triangles containing~$i$. This corollary implies that any triangle listing algorithm with sublinear complexity inherently requires some counter-intuitive mechanism that enables a triangle~$t$ to be output by a node not contained in $t$. This is precisely how our $O(n^{3/4}\log n)$-round algorithm works.

\paragraph{Other related works.}
Distributed algorithms for triangle-freeness have recently also been considered in the setting of property testing \cite{Censor-Hillel+DISC16} (see also \cite{Fraigniaud+DISC16}). Note that in property testing the goal is to decide whether either the network is triangle-free or it contains a large number triangles. Since there is no need to consider the case where the network may have only a small number of triangles, this property testing version is significantly easier than the problems studied in the present paper.   

Triangle finding is a fundamental problem in the field of centralized algorithms as well. 
It has been known for a long time that this problem is not harder than Boolean matrix multiplication~\cite{Itai+SICOMP78}.
A few years ago Vassilevska Williams and Williams showed a converse reduction~\cite{Williams+FOCS10}: they proved that a subcubic-time algorithm for triangle finding can be used, in a combinatorial way, to construct a subcubic-time algorithm for Boolean matrix multiplication. This result, combined with recent developments \cite{Bansal+ToC12,Czumaj+SICOMP09,PatrascuSTOC10,Vassilevska+SICOMP13}, have put triangle finding as a central problem in the recent theory of fine-grained complexity.

\section{Preliminaries}\label{sec:prelim}
In this paper we consider undirected graphs. We will use $G = (V, E)$ to denote the graph considered, and write $n=|V|$ and $m=|E|$. In this section we present some of our notations, give details of our model and present some facts about hashing functions that will be used in Section \ref{sec:UB}.\vspace{-2mm}

\paragraph{Graph-theoretic notations.}
 For any vertex $i\in V$, we denote $\mathcal{N}(i)$ the set of neighbors of $i$. For any finite set $X$, we will use the notation $\E(X)$ to represent the set of unordered pairs of elements in~$X$, and use $\mathcal{T}(X)$ to represent the set of unordered triples of elements in~$X$. We will write $\E=\E(V)$ and $\mathcal{T}=\mathcal{T}(V)$.
Given a pair of vertices $\{j,k\}\in\E$, we write
\[
\multri{\{j,k\}}=
\Big|\left\{
l\in V\:|\:
\{j,l\}\in E \textrm{ and } \{k,l\}\in E
\right\}\Big|.
\]
Given $t = \{j,k,l\} \in \mathcal{T}$, and $e \in \mathcal{E}$, we will write
$e \in t$ when $e = \{j, k\}$, $e = \{j, l\}$, or $e = \{k, l\}$.  For any $R \subseteq \mathcal{T}$, we denote by $\Pp(R)\subseteq E$ the set of edges $e\in E$ such that $e\in t$  for some triple $t \in R$.

We define a triangle $t = \{j, k, l\}\in \mathcal{T}$ to be an unordered triple of vertices where any pair corresponds to an edge in $E$. Note that if $e$ is an edge of the graph, then $\multri{e}$ represents the number of triangles containing $e$. We write $T(G)$ the set of all triangles contained in graph $G$. 

\vspace{-2mm}

\paragraph{Communication Model.} 
In the paper we mainly consider the CONGEST communication model. The graph 
$G = (V, E)$ represents the topology of the network, executions proceed with round-based synchrony and each node can transfer one $O(\log n)$-bit message to each adjacent node per round. All links and nodes (corresponding to the edges and vertices of $G$, respectively) are reliable and suffers no faults. Each node has a distinct identifier from a domain $\Ii$. For simplicity we will assume 
$\Ii=V = [0, n-1]$, but this assumption is not essential and easy to remove as long as $|\Ii| = \mathrm{poly}(n)$. It is also assumed that each node can access infinite sequence of local random bits, that is, the algorithm can be randomized. Initially, each node knows nothing about the topology of the network except the set of edges incident to itself and the value of $n$. All our upper bounds given in Section~\ref{sec:UB} hold for the CONGEST model.

The CONGEST clique model is a powerful variant of the CONGEST model. It allows an algorithm to transfer a $O(\log n)$-bit message per round between any two nodes not necessarily adjacent in~$G$ at each round. This means that in this model the communication topology is the complete graph on the $n$ nodes, and the graph~$G$ only takes the role of input instances to the algorithm. Except for the communication topology, all other features are common with the CONGEST model described in the above paragraph. The CONGEST clique model will be considered only in Section~\ref{sec:LB} when proving our lower bound (note that a lower bound for this model immediately holds for the CONGEST model as well).

\vspace{-2mm}

\paragraph{Triangle finding and listing in the CONGEST model.}
In the triangle finding problem, at least one node must output a triangle if $T(G)$ is not empty, or all the nodes must output ``not found'' otherwise. In the triangle listing problem, the goal is that for each triangle in $T(G)$ at least one node in $V$ outputs it. 

Formally, we describe the output of an algorithm for triangle finding or listing by an $n$-tuple $T = (T_0, T_1, \dots, T_{n-1})$, where  $T_i \subseteq T(G)$ is the output by node $i$. The algorithm can be deterministic or randomized. In the latter case, note that the condition on the $T_i$'s implies that any triple output at a node should corresponds to a triangle of $G$, and thus the algorithm should be one-sided error. Note also that we do not require that the $T_i$'s are mutually disjoint. We often use notation $T$ as the union of~$T_i$  over all nodes if there is no ambiguity. We say that the algorithm solves the triangle finding problem if its output $T$ satisfies $T\cap T(G)\neq \emptyset$. We say that it solves the triangle listing problem if $T=T(G)$.  \vspace{-2mm}

\paragraph{Hash functions.}

Let $\mathcal{X}$  and $\mathcal{Y}$ be two finite sets. 
For any integer $s\ge 1$, a family of hash functions $\Hh = \{h_1, h_2, \dots, h_p\}$, where each $h_i$ is a function from $\mathcal{X}$ to $\mathcal{Y}$, is called \emph{s-wise independent} if for any distinct
$x_1, x_2, \dots, x_s \in \mathcal{X}$ and any $y_1, y_2, \dots y_s 
\in \mathcal{Y}$, a function $h$ sampled from
$\Hh$ uniformly at random satisfies $\Pr[\bigwedge_{1 \leq i \leq s} h(x_i) 
= y_i] = 1 / |\mathcal{Y}|^s$. 
We will use the following lemma, which follows almost immediately from the definition of $s$-wise independence.


\begin{lemma}\label{l:hashfunction}
Let $\Hh$ be any $3$-wise independent family of hash functions from $\mathcal{X}$ to $\mathcal{Y}$, and $h$ be a function sampled from
$\Hh$ uniformly at random. Let $H(y) = \{x'' \in \mathcal{X}\:|\:h(x'')=y\}$.
For any triple $(x,x',y)\in \mathcal{X}\times \mathcal{X}\times \mathcal{Y}$ 
we have 
\[
\Pr\left[h(x)=h(x')=y \bigwedge 
\Big|H(y)\Big| \leq 4\left(2 + \frac{|\mathcal{X}| - 2}{|\mathcal{Y}|}\right)\right] \geq \frac{3}{4|\mathcal{Y}|^2}.
\]
\end{lemma}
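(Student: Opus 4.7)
The plan is a two-step argument: factor the joint probability as $\Pr[\text{collision}]\cdot\Pr[\text{small bucket}\mid\text{collision}]$, obtain the collision probability from pairwise independence (implied by 3-wise independence), and control the conditional bucket size by a Markov bound that exploits 3-wise independence to decouple a ``third'' index from the collision. First I would dispose of the degenerate case $x=x'$: there the collision event is just $h(x)=y$, which has probability $1/|\mathcal{Y}|$, and the conditional expectation of $|H(y)|$ is $1+(|\mathcal{X}|-1)/|\mathcal{Y}|$, which is strictly smaller than the threshold $2+(|\mathcal{X}|-2)/|\mathcal{Y}|$, so the main case bound follows a fortiori. From then on I assume $x\neq x'$.

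For distinct $x,x'$, pairwise independence immediately gives $\Pr[h(x)=h(x')=y]=1/|\mathcal{Y}|^2$. Next I would bound $\mathbb{E}[\,|H(y)|\mid h(x)=h(x')=y\,]$ by writing $|H(y)|=\sum_{x''\in\mathcal{X}}\mathbf{1}[h(x'')=y]$. The two pinned terms $x''\in\{x,x'\}$ contribute $2$ deterministically, and for each of the remaining $|\mathcal{X}|-2$ elements $x''$ the triple $(x,x',x'')$ is distinct, so 3-wise independence yields $\Pr[h(x'')=y\mid h(x)=h(x')=y]=1/|\mathcal{Y}|$. By linearity, the conditional expectation equals $2+(|\mathcal{X}|-2)/|\mathcal{Y}|$. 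Markov's inequality then says that $|H(y)|$ exceeds four times this value with conditional probability at most $1/4$, so the complementary event has conditional probability at least $3/4$. Multiplying by the collision probability $1/|\mathcal{Y}|^2$ gives $3/(4|\mathcal{Y}|^2)$, matching the claim.

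I do not anticipate a real obstacle; the argument is essentially ``conditional Markov after a collision'', and its only delicate point is locating exactly where 3-wise (rather than merely pairwise) independence is used, namely to decouple each third indicator $\mathbf{1}[h(x'')=y]$ from the joint event $\{h(x)=h(x')=y\}$ so that the conditional expectation cleanly factors. The factor of $4$ in the stated threshold is chosen precisely so that Markov leaves mass $3/4$ on the good side, producing the constant $3/4$ in the conclusion.
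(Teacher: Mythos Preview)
Your proof is correct and follows essentially the same approach as the paper: condition on the collision $h(x)=h(x')=y$, use 3-wise independence to compute the conditional expectation of $|H(y)|$ as $2+(|\mathcal{X}|-2)/|\mathcal{Y}|$, apply Markov's inequality with the factor $4$, and multiply by the collision probability $1/|\mathcal{Y}|^2$. The only difference is that you explicitly handle the degenerate case $x=x'$, which the paper's proof silently assumes away (it writes $\Pr[Z_x=Z_{x'}=1]=1/|\mathcal{Y}|^2$ and splits off two pinned terms, both of which presuppose $x\neq x'$); your treatment of that case is fine and indeed yields an even stronger bound.
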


\begin{proof}
Let us write $Z=\{x''\in \mathcal{X}\:|\:h(x'')=y\}$. For any $a\in \mathcal{X}$, let $Z_{a}$ be the indicator random variable corresponding to the event $a \in Z$. By the 3-wise independence property of $h$, we have
\begin{align*}
\lefteqn{E\Big[|Z|\:|\:Z_x=Z_{x'}=1\Big]} \hspace{10mm} \\
&= 2+\sum_{a \in \mathcal{X} \setminus \{x, {x'}\}} E[Z_a\:|\:Z_x=Z_{x'}=1] \\
&= 2+\frac{|\mathcal{X}| - 2}{|\mathcal{Y}|}.
\end{align*}
By Markov's 
inequality, we have
\begin{align*}
\Pr
\left[
|Z| > 4\left(2+\frac{|\mathcal{X}| - 2}{|\mathcal{Y}|}\right) \ \middle| \  Z_{x}=Z_{{x'}}=1 
\right]
&\leq \frac{1}{4}.
\end{align*}
Since $\Pr[Z_x = Z_{x'} = 1] = 1/|\mathcal{Y}|^2$ holds, we obtain
\begin{align*}
\lefteqn{\Pr\left[Z_x = Z_{x'} = 1 \bigwedge |Z| \le 4\left(2+\frac{|\mathcal{X}| - 2}{|\mathcal{Y}|}\right)\right]} \hspace{10mm} \\
&= \Pr\left[|Z| \le 4\left(2+\frac{|\mathcal{X}| - 2}{|\mathcal{Y}|}\right) \middle| Z_x = Z_{x'} = 1 \right] \\
& \hspace{40mm} \times
\Pr[Z_x = Z_{x'} = 1] \\
&\ge 
\frac{3}{4} \times \frac{1}{|\mathcal{Y}|^2},
\end{align*}
as claimed.
\end{proof}

Note that using the standard construction by Wegman and Carter \cite{WC81} we can encode $k$-wise independent hash functions using only $O(k\log |\mathcal{Y}|)$ bits.

\section{Upper Bounds}\label{sec:UB}
In this section we prove the following two theorems.
\begin{theorem}\label{th:UB-finding}
Let $\delta>0$ be any constant.
In the CONGEST model there exists a randomized algorithm for triangle finding with round complexity $O(n^{2/3}(\log n)^{2/3})$, where $n$ denotes the size of the network, and success probability at least $1-\delta$.
\end{theorem}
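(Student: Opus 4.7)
The plan is to reduce triangle finding to a bounded-bandwidth routing problem via random hashing. Fix a parameter $r = \bigl\lceil n^{1/3}/(\log n)^{2/3}\bigr\rceil$ and sample a $3$-wise independent hash function $h\colon V\to[r]$ from a Wegman--Carter family. Because such an $h$ is described by only $O(\log n)$ bits, all nodes can share it through a one-shot broadcast (for instance along a BFS tree) at a cost subsumed by the main step. Let $X_y := h^{-1}(y)$ denote the $y$-th bucket.

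The core routing step proceeds as follows. Each vertex $a$ locally forms the set $L_a := N(a)\cap X_{h(a)}$ of its own neighbors lying in its own bucket, and transmits $L_a$ on each of its incident edges. Lemma~\ref{l:hashfunction}, together with the second-moment argument underlying it, ensures that the relevant buckets satisfy $|X_y| = O(n/r) = O\bigl(n^{2/3}(\log n)^{2/3}\bigr)$ with constant probability, so this step uses $O\bigl(n^{2/3}(\log n)^{2/3}\bigr)$ CONGEST rounds, all edges being used in parallel. After receiving all incoming lists, a vertex $w$ reports the triangle $\{u,v,w\}$ whenever it finds $u\in L_v$ (equivalently $v\in L_u$) for two of its neighbors $u,v$; this happens precisely when $h(u)=h(v)$.

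For a fixed triangle $\{u,v,w\}$, the detection probability is therefore the probability that some pair of its vertices lies in a common (small) bucket, which by Lemma~\ref{l:hashfunction} and a union bound over the three pairs of vertices is at least $\Omega(1/r)$.

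The main obstacle will be amplifying this $\Omega(1/r)$ per-triangle probability to the required overall success probability $1-\delta$ without paying $\Omega(r)$ blind repetitions, which would push the complexity up to $\Omega(n)$. I would handle this by coupling the hashing step with a degree-thresholded deterministic sub-procedure: set $\Delta = \Theta\bigl(n^{2/3}(\log n)^{2/3}\bigr)$, and in $O(\Delta)$ rounds have every edge $\{u,v\}$ with $\min(d_u,d_v)\le \Delta$ exchange the low-degree endpoint's full neighborhood, so that any triangle containing a vertex of degree at most $\Delta$ is found deterministically by one of its low-degree vertices. Only triangles all three of whose vertices have degree exceeding $\Delta$ remain, and for these the hashing scheme can be tuned (for instance by restricting it to the subgraph induced by the high-degree vertices, whose edge count is controlled through a Cauchy--Schwarz style estimate) so that a single randomized execution detects such a triangle with constant probability. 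Balancing these two ingredients so that both subroutines fit within $O\bigl(n^{2/3}(\log n)^{2/3}\bigr)$ rounds is the technical heart of the proof.
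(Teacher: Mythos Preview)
Your hashing step is essentially sound and indeed yields a per-triangle detection probability of $\Theta(1/r)$ in $O(n/r)$ rounds. The gap is exactly where you say the ``technical heart'' lies, and the sketch you give there does not close it.

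The degree-threshold split does not mesh with the hashing scheme. A triangle in which all three vertices have degree exceeding $\Delta$ can still have each of its three edges lying in very few triangles; conversely, the graph can be dense (even $K_n$), so restricting to the high-degree subgraph need not shrink anything. Your appeal to a ``Cauchy--Schwarz style estimate'' on the edge count of the high-degree subgraph does not yield a useful bound here, and nothing in your scheme increases the $\Theta(1/r)$ collision probability for a fixed triangle: that probability is a property of the hash family and the single shared $h$, and is oblivious to vertex degrees. So ``tuning'' the hashing on the high-degree part gives you no amplification, and you are back to $\Omega(r)$ independent repetitions, i.e.\ $\Omega(n)$ rounds. (There is also a secondary issue: broadcasting a single shared $h$ along a BFS tree costs $\Theta(D)$ rounds, which can itself be $\Theta(n)$ in CONGEST; the paper avoids this by having each node pick its own hash and send it only to its neighbors.)

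The paper's decomposition is different in kind. It splits not on vertex degree but on \emph{edge-triangle multiplicity}: a triangle is $\varepsilon$-heavy if one of its edges lies in at least $n^\varepsilon$ triangles. Heavy triangles are found by a trivial neighbor-sampling step (Proposition~\ref{prop:tri-heavy-finding}): sample each neighbor with probability $n^{-\varepsilon}$ and send the sample; a heavy edge has $\ge n^\varepsilon$ common neighbors, so one is sampled with constant probability. The real work is the non-heavy case (Proposition~\ref{prop:tri-light}): sample a set $X$ so that every non-heavy triangle has all three edges in $\Delta(X)$ (pairs with no common neighbor in $X$), and then run a recursive procedure that exploits a combinatorial bound (Lemma~\ref{lma:not-heavy-2}) showing that most nodes are ``good'' with respect to $(U,X)$, peeling off good nodes in $O(\log n)$ phases. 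This is where the $n^{(1+\varepsilon)/2}\log n$ term comes from, and balancing it against $n^{1-\varepsilon}$ gives the exponent $2/3$. None of this machinery is present in your proposal, and the degree-threshold idea is not a substitute for it.
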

\begin{theorem}\label{th:UB-listing}
In the CONGEST model there exists a randomized algorithm for triangle listing with round complexity $O(n^{3/4}\log n)$ and success probability at least $1-1/n$, where $n$ denotes the size of the network.
\end{theorem}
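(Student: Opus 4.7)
The plan is to combine random vertex bucketing with a routing-and-local-enumeration scheme, following the bucketing paradigm that worked for Dolev et al.\ in the CONGEST clique, but with a smaller number of coarser buckets so that the routing load can be absorbed by the underlying network in $\tilO(n^{3/4})$ rounds. First I would sample a $3$-wise independent hash function $h:V\to\{0,1,\ldots,K-1\}$ with $K=\Theta(n^{1/4})$, and let $B_a=h^{-1}(a)$. By the $3$-wise independence, a variance computation in the spirit of Lemma~\ref{l:hashfunction} yields $|B_a|=O(n^{3/4})$ for every bucket $a$ with high probability. Since the number of unordered triples of bucket indices is at most $K^3=O(n^{3/4})\le n$, I can fix an injection from such triples to $V$, and let $v_{\{a,b,c\}}$ denote the node responsible for triple $\{a,b,c\}$. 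Note that $v_{\{a,b,c\}}$ need not lie in $B_a\cup B_b\cup B_c$, which is precisely the feature identified in the introduction as necessary for sublinear listing.

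The algorithm then proceeds in two phases. In the routing phase, every edge $\{u,v\}\in E$ is delivered to each responsible node $v_{\{a,b,c\}}$ such that $\{h(u),h(v)\}\subseteq\{a,b,c\}$; there are $O(K)=O(n^{1/4})$ such destinations per edge. In the enumeration phase, each responsible node $v_{\{a,b,c\}}$ holds all edges inside $B_a\cup B_b\cup B_c$ and outputs every triangle with one endpoint in each of the three buckets. Because every triangle $\{x,y,z\}$ of $G$ corresponds to the unique bucket triple $\{h(x),h(y),h(z)\}$, this guarantees that each triangle of $T(G)$ is listed by at least one node, so correctness reduces entirely to the success of the routing phase.

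For the routing, the total payload is $\tilO(m\cdot n^{1/4})\le \tilO(n^{9/4})$ bits, and the bucket-size bound gives at most $\tilO(n^{3/2})$ bits to be received by any single responsible node. To meet the $O(n^{3/4}\log n)$ deadline I would send each edge message along a path through a random intermediate relay, and then apply a random delay schedule so that at any given round at most $\tilO(1)$ messages cross any given edge of $G$. A Chernoff bound combined with a union bound over the $O(n^{3/4})$ triples and the $\tilO(n^{9/4})$ scheduled messages, boosted by $O(\log n)$ independent repetitions of the hash, should show that edge congestion in $G$ and the reception load at each responsible node both stay $\tilO(n^{3/4})$, so that a Leighton--Maggs--Rao packet schedule completes within the target round count and achieves success probability $1-1/n$.

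The main obstacle will be executing this routing analysis in the bounded-topology CONGEST model: unlike the CONGEST clique, $v_{\{a,b,c\}}$ cannot simply pull $\tilO(n^{3/2})$ messages from arbitrary sources in $\tilO(n^{1/2})$ rounds, so the analysis must bound the bottlenecks in $G$ itself. This likely forces either a careful choice of the injection $\{a,b,c\}\mapsto v_{\{a,b,c\}}$ so that responsible nodes are spread across $G$ rather than aggregated near dense regions, or an additional randomized relay step that turns arbitrary destinations into essentially uniformly random ones, allowing uniform multi-commodity routing results to kick in. I expect most of the work in the proof to sit in this congestion argument and in showing that the load bounds hold simultaneously for all buckets, responsible nodes, and network edges with the required $1-1/n$ probability.
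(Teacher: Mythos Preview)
Your approach is fundamentally different from the paper's, and the obstacle you flag in your last paragraph is in fact fatal in the CONGEST model, not merely ``the main work.'' The bucketing-and-aggregation scheme you describe is exactly the Dolev--Lenzen--Peled algorithm; it succeeds in the CONGEST clique because every node has $n-1$ links and hence $\Theta(n\log n)$ bits of incoming bandwidth per round, and because there are no small cuts. In the standard CONGEST model neither property holds. Consider a clique on $n^{3/4}$ vertices joined by a single bridge edge to a path on the remaining $n-n^{3/4}$ vertices. The clique contains $\Theta(n^{3/2})$ edges, so the total routed payload is $\Theta(n^{3/2})\cdot\Theta(n^{1/4})=\Theta(n^{7/4})$ edge-messages, essentially all originating in the clique. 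Under any injection from bucket triples to $V$, all but an $o(1)$ fraction of the $\Theta(n^{3/4})$ responsible nodes lie on the path, and every message destined for them must cross the single bridge edge; no random relay, random delay, or Leighton--Maggs--Rao schedule can push more than one $O(\log n)$-bit message per round across one edge, so the routing phase alone takes $\Theta(n^{7/4})$ rounds. Even ignoring the cut, each path-side responsible node has degree~$2$ yet must receive $\Theta(n)$ edges (three buckets contain $\Theta(n^{1/2})$ clique vertices each), which already forces $\Theta(n)$ rounds. The ``uniform multi-commodity routing results'' you invoke bound congestion only up to the sparsest cut, and here that cut has capacity~$1$; choosing the injection adaptively would require knowing the global structure of $G$, which is exactly what you do not have in sublinear time.

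The paper sidesteps multi-hop routing entirely: every message travels exactly one hop, from a node to a neighbor in $G$, so the argument is insensitive to the cut structure of $G$. Triangles are split into $\varepsilon$-heavy (some edge lies in $\ge n^\varepsilon$ triangles) and light. Heavy triangles are handled by having each node $a$ pick a $3$-wise independent hash $h_a:V\to[\lfloor n^{\varepsilon/2}\rfloor]$ and receive from each neighbor $j$ only the edges $\{j,l\}$ with $h_a(l)=0$; this costs $O(n^{1-\varepsilon/2})$ rounds and, because a heavy edge has $n^\varepsilon$ common neighbors $a$, some $a$ sees all three edges with constant probability (Proposition~\ref{prop:tri-heavy}). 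Light triangles are handled by sampling $X\subseteq V$ at density $n^{-\varepsilon}$, broadcasting $\mathcal{N}(i)\cap X$ in $O(n^{1-\varepsilon})$ rounds, and then having each $k$ send to each neighbor $j$ the set $\mathcal{S}_U^{X}(j,k)$ of candidate third vertices; a double-counting argument (Lemma~\ref{lma:not-heavy-2}) shows these sets have size $O(\sqrt{n^{1+\varepsilon}\log n})$ for most $j$, and a recursive loop that repeatedly peels off the ``good'' nodes handles the exceptions in $O(\log n)$ iterations (Proposition~\ref{prop:tri-light}). Balancing at $n^\varepsilon=n^{1/2}/\log^2 n$ gives $O(n^{3/4}\log n)$ rounds. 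The essential idea you are missing is that sublinear listing in CONGEST must be achieved by controlling \emph{how much} each node says to each neighbor, not by routing edges to faraway aggregators.
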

Theorems \ref{th:UB-finding} and \ref{th:UB-listing} follow from three algorithms described in Propositions \ref{prop:tri-heavy-finding}--\ref{prop:tri-light} below, which rely on the following concept of heavy triangles. Let $\varepsilon$ be any real number such that $0\le \varepsilon\le 1$.  We say that a triangle $t \in T(G)$ is \emph{$\varepsilon$-heavy} if there exists an edge $e \in t$ such that 
$\multri{e} \geq n^{\varepsilon}$. We define $\trih{\varepsilon} \subseteq T(G)$ as 
the set of all $\varepsilon$-heavy triangles in $T(G)$. The first proposition is almost trivial, and
shows how to find an $\varepsilon$-heavy triangle by a straightforward sampling strategy:
\begin{proposition}\label{prop:tri-heavy-finding}
There exists a $O(n^{1-\varepsilon})$-round randomized algorithm $\Aa_1$  returning a set $T\subseteq T(G)$ such that, if $\trih{\varepsilon}\neq \emptyset$, then $T\cap\trih{\varepsilon}\ \neq \emptyset$ with probability $\Omega(1)$.
\end{proposition}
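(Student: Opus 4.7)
The plan is a direct uniform-sampling strategy: pick $\Theta(n^{1-\varepsilon})$ random ``witness'' vertices and let each edge locally check whether one of them falls in its common neighborhood. The high-level intuition is that an $\varepsilon$-heavy edge $e$ has $\multri{e} \geq n^\varepsilon$ common neighbors, so a sample of size $\Theta(n^{1-\varepsilon})$ hits this neighborhood with constant probability, after which the two endpoints of $e$ can witness a heavy triangle in $\trih{\varepsilon}$.

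Concretely, each vertex independently joins a sample set $S$ with probability $p = c/n^\varepsilon$ for an absolute constant $c$ to be fixed later; a standard Chernoff bound gives $|S| \leq C n^{1-\varepsilon}$ with probability at least, say, $7/8$ for some $C = C(c)$. If the algorithm ever detects that this size has been exceeded, it aborts, which costs only a constant factor in the success probability. In a single round each $v \in S$ informs every neighbor that it is sampled, so that every vertex $u$ learns $S \cap \mathcal{N}(u)$. Each vertex $a$ then pipelines $S \cap \mathcal{N}(a)$ to every one of its neighbors, sending one $O(\log n)$-bit identifier per incident edge per round; this stage lasts $|S \cap \mathcal{N}(a)| \leq |S| = O(n^{1-\varepsilon})$ rounds and is executed at every vertex in parallel. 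After this phase, both endpoints of every edge $\{a,b\}$ know the set $S \cap \mathcal{N}(a) \cap \mathcal{N}(b)$, and whenever the intersection is nonempty $a$ outputs the corresponding triangles. The total round complexity is $O(n^{1-\varepsilon})$.

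For correctness, assume $\trih{\varepsilon} \neq \emptyset$ and fix any $\varepsilon$-heavy triangle together with an edge $e = \{a,b\}$ inside it satisfying $\multri{e} \geq n^\varepsilon$. The probability that $S$ avoids every common neighbor of $a$ and $b$ is at most $(1-p)^{n^\varepsilon} \leq e^{-c}$, which for any positive constant $c$ is bounded strictly away from $1$; a union bound with the $|S|$-size event still leaves the overall success probability at $\Omega(1)$, so the returned set $T$ intersects $\trih{\varepsilon}$ as required. The only point requiring any care is to fit the pipelined forwarding step into the $O(n^{1-\varepsilon})$ round budget deterministically, which the abort mechanism handles; no deeper obstacle is expected here, consistent with the authors' remark that the statement is ``almost trivial''.
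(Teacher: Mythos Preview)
Your proposal is correct and follows essentially the same approach as the paper: sample vertices (equivalently, neighbors) with probability $\Theta(n^{-\varepsilon})$, transmit the sampled subset of each node's neighborhood across every edge in $O(n^{1-\varepsilon})$ rounds, and locally check for a triangle on the heavy edge. The only cosmetic difference is that the paper has each node $j$ independently subsample $S_j\subseteq\mathcal{N}(j)$ and send it only when $|S_j|\le 4n^{1-\varepsilon}$, whereas you use a single global sample $S$ with a size-abort; both variants yield the same $O(n^{1-\varepsilon})$ bound and $\Omega(1)$ success probability by the identical hitting argument.
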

\begin{proof}
Algorithm $\Aa_1$ is as follows: First, each node $j\in V$ constructs a random set $S_j\subseteq\mathcal{N}(j)$ by including in $S_j$ each element of $\mathcal{N}(j)$ with probability $n^{-\varepsilon}$.  If $|S_j|> 4n^{1-\varepsilon}$, node $j$ then does not send anything to its neighbors. Otherwise (i.e., if $|S_j|\le 4n^{1-\varepsilon}$), node $j$ then sends $S_j$ to each neighbor $k\in\mathcal{N}(j)$, and each such neighbor~$k$ checks if there is a triangle containing $j$, $k$ and a third node in $S_j$, which can be done by computing locally the set $\mathcal{N}(k)\cap S_j$. 

Observe that if $\trih{\varepsilon}\neq \emptyset$ then at least one $\varepsilon$-heavy triangle will be found with constant probability. Indeed, for each edge $\{j,k\}\in E$ such that $\multri{\{j,k\}}\ge n^\varepsilon$, with constant probability at least one node adjacent to both $j$ and~$k$ is included in $S_j$, and $S_j$ is small enough.
\end{proof}
The second proposition is about finding each $\varepsilon$-heavy triangle: 
\begin{proposition}\label{prop:tri-heavy}
There exists a $O(n^{1-\varepsilon/2})$-round randomized algorithm $\Aa_2$ returning a set $T\subseteq T(G)$ such that, for any triangle $t\in \trih{\varepsilon}$, this set $T$ contains $t$ with probability $\Omega(1)$.
\end{proposition}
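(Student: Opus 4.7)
The plan is to have every node broadcast a random, budget-limited slice of its neighborhood to each of its neighbors in $O(n^{1-\varepsilon/2})$ rounds, and then detect triangles from the received information. Concretely, set $R=\lceil n^{1-\varepsilon/2}\rceil$. Each node $v$ independently picks a uniformly random subset $Z_v\subseteq\mathcal{N}(v)$ of size $\min\{d_v,R\}$ and sends $Z_v$ to each of its neighbors; this takes $O(n^{1-\varepsilon/2})$ rounds. Each node $u$ then locally enumerates pairs of its neighbors $v,w$ with $w\in Z_v$ (or $v\in Z_w$) and outputs $\{u,v,w\}$ whenever the three vertices form a triangle.

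The correctness analysis splits a fixed heavy triangle $t=\{j,k,l\}$ with heavy edge $\{j,k\}$ (so $|\mathcal{N}(j)\cap\mathcal{N}(k)|\geq n^\varepsilon$) into two cases. In the easy case, some vertex of $t$ has degree at most $R$; its $Z$-set then coincides with its full neighborhood and contains both of its companions in $t$, so $t$ is detected deterministically. In the hard case, all three vertices have degree strictly greater than $R$. Here I would augment the broadcast phase using the $3$-wise independent hashing of Lemma~\ref{l:hashfunction}: sample $h\colon V\to[M]$ with $M=\Theta(n^{\varepsilon/2})$ (its seed has $O(\log n)$ bits and can be shared via the standard construction cited after Lemma~\ref{l:hashfunction}), and have every $v$ also transmit $\mathcal{N}(v)\cap h^{-1}(h(v))$. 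By Lemma~\ref{l:hashfunction} the bucket $h^{-1}(h(v))$ has size $O(n/M)=O(n^{1-\varepsilon/2})$ with constant probability, so the round budget is preserved.

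The key amplification argument leverages the $\geq n^\varepsilon$ common neighbors $l'$ of the heavy edge $\{j,k\}$ as witnesses. I would apply Lemma~\ref{l:hashfunction} with $x=l$ and $x'=l'$ for each such $l'$ to lower-bound the probability that $h(l')=h(l)$ in a small bucket, and then sum over $l'$: the expected number of colliding common neighbors is $\Omega(n^{\varepsilon/2})$, which a second-moment (Chebyshev) estimate---whose variance term is controlled precisely by the $3$-wise independence of $h$---turns into an $\Omega(1)$ probability that at least one collision actually occurs. When such a colliding $l'$ exists, the bucket-restricted message from $k$ to $l'$ reveals $l$ to $l'$, and since $l'\in\mathcal{N}(j)\cap\mathcal{N}(k)$ one of the three vertices of $t$ receives enough information to output $t$ via Step~2.

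The main obstacle is the last link: turning the abstract collision event into detection of the \emph{specific} triangle $t$ (as opposed to just some heavy triangle containing $\{j,k\}$) within the same round budget. Making this bridge precise requires a careful bookkeeping of which messages carry the identifier of $l$ itself under the hash collision and ensuring that one of $j$, $k$, $l$ observes both $\{k,l\}$ (resp.\ $\{j,l\}$ or $\{j,k\}$) edges needed to certify $t$. The second-moment calculation and the covariance bound---both of which follow from the same $3$-wise independence used in Lemma~\ref{l:hashfunction}---constitute the technical heart of the proof.
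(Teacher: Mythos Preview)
Your easy case (some vertex of $t$ has degree at most $R$) is fine, but the hard case has a structural gap, not merely a bookkeeping one. With a \emph{single} global hash $h$, the event you amplify---``some common neighbor $l'$ of $\{j,k\}$ satisfies $h(l')=h(l)$''---does indeed occur with probability $\Omega(1)$ by your second-moment argument, but it does not imply that any node can certify the specific triangle $t=\{j,k,l\}$. With the message you specify, $v$ broadcasts $\mathcal{N}(v)\cap h^{-1}(h(v))$; for $l$ to appear in $k$'s message one needs $h(l)=h(k)$, not $h(l)=h(l')$, so the collision you produce is irrelevant to what $l'$ actually receives. Even under the more favorable receiver-tailored variant ($v$ sends $\mathcal{N}(v)\cap h^{-1}(h(u))$ to each neighbor $u$), the collision $h(l')=h(l)$ lets $l'$ learn that $\{j,l\},\{k,l\}\in E$, but $l'$ still has no information about the edge $\{j,k\}$; learning it would require the additional coincidence $h(j)=h(l')$ or $h(k)=h(l')$. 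Once you demand two simultaneous hash collisions, the union over $l'\in A$ no longer amplifies: all the relevant events factor through the fixed values $h(j),h(k),h(l)$ and are therefore perfectly correlated across different $l'$. Concretely, under either variant the probability that \emph{any} node (in $t$ or in $A$) certifies $t$ is $O(n^{-\varepsilon/2})$ in the regime where all three degrees are $\Theta(n)$, so the random sets $Z_v$ do not rescue you either.

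The idea you are missing is precisely what the paper does: make the hash functions \emph{independent per receiver}. Each node $a$ samples its own $h_a\colon V\to\{0,\dots,\lfloor n^{\varepsilon/2}\rfloor-1\}$ from a $3$-wise independent family and broadcasts the $O(\log n)$-bit seed to its neighbors; then every neighbor $j$ of $a$ sends back the set $\{\,\{j,l\}\in E : h_a(l)=0\,\}$, truncated if it exceeds the $O(n^{1-\varepsilon/2})$ budget. Node $a$ now sees all three edges of $t$ whenever $h_a(k)=h_a(l)=0$ and the bucket is small, since it receives $\{j,k\}$ and $\{j,l\}$ from $j$ and $\{k,l\}$ from $k$; Lemma~\ref{l:hashfunction} gives this event probability at least $3/(4n^{\varepsilon})$. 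Because the $h_a$ are mutually independent across the $|A|\ge n^{\varepsilon}$ common neighbors of the heavy edge $\{j,k\}$, the overall failure probability is at most $\bigl(1-3/(4n^{\varepsilon})\bigr)^{n^{\varepsilon}}\le e^{-\Omega(1)}$. No degree case split and no second-moment estimate is needed; the amplification comes entirely from the independence of the per-node seeds.
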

The third proposition is about finding each triangle that is not  $\varepsilon$-heavy:
\begin{proposition}\label{prop:tri-light}
There exists a $O(n^{1-\varepsilon}+n^{(1+\varepsilon)/2}\log n)$-round randomized algorithm $\Aa_3$ returning a set $T\subseteq T(G)$ such that, for any triangle $t\in T(G)\setminus\trih{\varepsilon}$, this set~$T$ contains $t$ with probability $\Omega(1)$.
\end{proposition}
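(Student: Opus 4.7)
I plan to split the proof into two phases matching the two summands in the stated complexity. Fix the threshold $d^\star=n^{1-\varepsilon}$. In Phase~A, each vertex $v$ with $|\mathcal{N}(v)|\leq d^\star$ broadcasts its entire neighborhood $\mathcal{N}(v)$ along each incident edge, one identifier per round, using $O(d^\star)=O(n^{1-\varepsilon})$ rounds in total. For any non-heavy triangle $\{j,k,l\}$ with $|\mathcal{N}(j)|\leq d^\star$, the neighbor $l$ of $j$ receives $\mathcal{N}(j)$ and, since $k\in \mathcal{N}(j)\cap\mathcal{N}(l)$, deterministically outputs the triangle. After Phase~A the only potentially missed non-heavy triangles are those whose three vertices all lie in the high-degree set $H=\{v:|\mathcal{N}(v)|>d^\star\}$.

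In Phase~B I use hashing. Draw a $3$-wise independent hash function $h\colon V\to[K]$ with $K=\Theta(n^{(1-\varepsilon)/2})$; have each $v$ announce $h(v)$ to its neighbors in one round; then have each $v$ broadcast the set $\mathcal{N}(v)\cap h^{-1}(h(v))$ along every incident edge. By Lemma~\ref{l:hashfunction} each bucket has size $O(n/K)=O(n^{(1+\varepsilon)/2})$ with good probability, so a single broadcast fits in $O(n^{(1+\varepsilon)/2})$ rounds. For a residual non-heavy triangle $\{j,k,l\}$ in which $h(j)=h(k)$ --- an event of probability $\Omega(1/K)$ by 3-wise independence --- vertex $l$ receives $k$ inside $j$'s transmission, combines it with its own knowledge that $k\in\mathcal{N}(l)$, and declares the triangle.

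The main obstacle is amplifying the per-trial success probability from $\Omega(1/K)$ to a constant while staying inside the remaining $O(n^{(1+\varepsilon)/2}\log n)$ budget, since $\Theta(K)$ naive repetitions would already cost $\Theta(n)$. My plan is to use that Phase~A already handles every triangle touching a low-degree vertex: in Phase~B the hash-based broadcast can be restricted to $H$, whose size is controlled by the double-counting bound $|H|\leq 2m/d^\star$ together with the non-heavy hypothesis $\multri{e}\leq n^\varepsilon$ on each edge of a residual triangle. With this restriction the bucket sizes scale like $|H|/K$ rather than $n/K$, and $\Theta(\log n)$ independent trials with freshly drawn hash functions should suffice to bring the per-triangle detection above a constant without exceeding the round budget. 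Working out the precise joint tuning of $K$, the number of trials, and the bucket-size concentration guaranteed by Lemma~\ref{l:hashfunction} is the most delicate step of the proof.
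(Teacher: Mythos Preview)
Phase~A is fine, but Phase~B has a gap that your proposed fix does not close. With $K=\Theta(n^{(1-\varepsilon)/2})$ the event $h(j)=h(k)$ has probability $\Theta(1/K)$, so $\Theta(\log n)$ independent trials raise the detection probability for a fixed triangle only to $\Theta((\log n)/K)=o(1)$, not to $\Omega(1)$; you genuinely need $\Theta(K)$ trials. Restricting the broadcasts to $H$ brings the per-trial cost down to $O(|H|/K)$, so $\Theta(K)$ trials cost $\Theta(|H|)$ rounds in total --- but nothing forces $|H|=o(n)$. The inequality $|H|\le 2m/d^\star$ is vacuous when $m=\Theta(n^2)$, and the hypothesis $\#(e)\le n^\varepsilon$ constrains only the three edges of the particular triangle $t$ you are trying to list, not the global density of $G$: the input may well have $\Theta(n^2)$ edges and $H=V$ while still containing non-$\varepsilon$-heavy triangles. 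There is no joint tuning of $K$ and the trial count that escapes this: the product (trials)$\times$(bucket size) is at least (trials)$\times |H|/K\ge |H|\cdot$(success probability), so achieving $\Omega(1)$ success forces $\Omega(|H|)$ rounds.

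The paper exploits the non-heavy property in a completely different way. It samples $X\subseteq V$ at rate $\Theta(n^{-\varepsilon})$ and has every node $i$ broadcast $\mathcal{N}(i)\cap X$, costing $O(n^{1-\varepsilon})$ rounds; afterwards each node $k$ can locally compute, for every neighbor $j$, the set $\mathcal{S}(j,k)=\{l\in\mathcal{N}(k)\cap U:\{j,l\}\text{ has no common neighbor in }X\}$. The crucial combinatorial fact is that any pair $\{j,l\}$ surviving in $\Delta(X)$ has at most $O(n^\varepsilon\log n)$ common neighbors w.h.p., so a double count over triples gives $\sum_{j,k\in U}|\mathcal{S}(j,k)|=O(|U|^2 n^\varepsilon\log n)$ for every $U\subseteq V$. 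Taking $r=\Theta(\sqrt{n^{1+\varepsilon}\log n})$, it follows that at most $|U|/2$ vertices $j\in U$ can have more than $r$ neighbors $k$ with $|\mathcal{S}(j,k)|>r$. This yields a halving recursion: send the small $\mathcal{S}(j,k)$'s (cost $r$), send the short exception lists of the ``good'' vertices (cost $r$), delete the good half, and repeat $O(\log n)$ times, for total cost $O(r\log n)=O(n^{(1+\varepsilon)/2}\log n)$. The non-heavy hypothesis on $t$ is used only once, to guarantee that its three edges lie in $\Delta(X)$ with constant probability; once that holds, the recursion lists $t$ deterministically.
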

 
Proposition \ref{prop:tri-heavy} and especially Proposition \ref{prop:tri-light} are the main technical contributions of this section. Their proofs are given in Subsections \ref{sub:heavy} and \ref{sub:light}, respectively.

We now explain how Theorems \ref{th:UB-finding} and \ref{th:UB-listing} immediately follow from these three propositions.
\begin{proof}[Proof of Theorem \ref{th:UB-finding}]
The triangle finding algorithm simply applies Algorithm $\Aa_1$ and then Algorithm $\Aa_3$. 
From Propositions \ref{prop:tri-heavy-finding} and \ref{prop:tri-light} we know that a triangle will thus be found with at least constant probability (if $G$ contains a triangle). For any constant $\delta$ the success probability can be amplified to $1-\delta$ by repeating this process $c$ times for a large enough constant~$c$.

The round complexity of this algorithm is 
$
O(n^{1-\varepsilon} + n^{1-\varepsilon} + \allowbreak n^{(1+\varepsilon)/2}\log n).
$
Choosing $\varepsilon$ such that $n^\varepsilon =\frac{n^{1/3}}{(\log n)^{2/3}}$ gives the claimed upper bound.
\end{proof}

\begin{proof}[Proof of Theorem \ref{th:UB-listing}]
The triangle listing algorithm repeats $\ceil{c \log n}$ times the following process (here $c$ is a large constant): apply Algorithm $\Aa_2$ and then Algorithm $\Aa_3$. 

We now show that this algorithm lists all the triangles of $G$. Let $t$ be a triangle in $T(G)$. From Propositions \ref{prop:tri-heavy} and \ref{prop:tri-light} we know that $t$ is found with at least constant probability at each step. If~$c$ is large enough, repeating this process $\ceil{c\log n}$ times guarantees that $t$ is found with probability at least $1-1/n^4$. From the union bound, we can conclude that all triangles are found with probability at least $1-1/n$.

The round complexity of this algorithm is 
$
O(((n^{1-\varepsilon/2} + n^{1-\varepsilon} \allowbreak + n^{(1+\varepsilon)/2}\log n)\log n).
$
Choosing~$\varepsilon$ such that $n^\varepsilon =\frac{n^{1/2}}{(\log n)^{2}}$ gives the claimed upper bound.
\end{proof}

\subsection{Listing all $\varepsilon$-heavy triangles: Proof of Proposition \ref{prop:tri-heavy}}\label{sub:heavy}
The goal of this subsection is to prove Proposition \ref{prop:tri-heavy}. 

The main idea to find $\varepsilon$-heavy triangles is that each node $j$ will decide which information about $\mathcal{N}(j)$ it will send to each neighbor $a$ according to a hash function $h_a$ generated by $a$. More precisely, the function $h_a$ will be taken (by node $a$, and then distributed to all its neighbors) from a 3-wise independent family $\Hh$ of hash functions from~$V$ to $\{0,1,\ldots,\lfloor n^{\varepsilon/2}\rfloor-1\}$, and $j$ will send an edge $\{j,l\}$ to $a$ if $h_a(l)=0$. The complete algorithm, which is Algorithm $\Aa_2$ claimed in Proposition~\ref{prop:tri-heavy}, is described in Figure \ref{fig:algorithmheavylisting} and analyzed below.

\begin{figure*}[ht]
\begin{center}
\fbox{
\begin{minipage}{15 cm} 
\begin{itemize}
\item[1.]
Each node $i\in V$ chooses a function $h_i\colon V\to\{0,1,\ldots,\lfloor n^{\varepsilon/2}\rfloor-1\}$ uniformly at random from a 3-wise independent family  $\Hh$ of hash functions, and sends $h_i$ to all the nodes
in $\mathcal{N}(i)$. 
\item[2.]
Each node $j\in V$ locally computes, for each node $a\in \mathcal{N}(j)$, the set of edges
$E^j_a = \{\{j,l\}\in E \:|\: h_a(l) = 0\}$ and then sends $E^j_a$ to $a$ if $|E^j_a| \leq 8+\frac{4n}{\lfloor n^{\varepsilon/2}\rfloor}$.
\item[3.]For any node $i\in V$, let $F_i$ be the set of edges received by $i$ at the first step. Node $i$
outputs all triples $\{j, k, l\}$ satisfying $\{j,k\}, \{j, l\}, \{k, l\} \in F_i$.
\end{itemize}
\end{minipage}
}
\end{center}
\caption{Algorithm $\Aa_2$ finding any $\varepsilon$-heavy triangle in $O(n^{1 - \varepsilon/2})$ rounds.}
\label{fig:algorithmheavylisting}
\end{figure*}

We first prove the correctness of Algorithm $\Aa_2$.
Let $t = \{j, k, l\}$ be any $\varepsilon$-heavy triangle of~$G$. Let $\{j, k\}$ be the edge shared by at least $n^{\varepsilon}$ triangles and $A$ be the set of nodes such that $\{j, k, a\}$ forms a triangle for each $a \in A$. Consider the function $h_a\in\mathcal{H}$ chosen by node $a$. Applying Lemma~\ref{l:hashfunction} with $|\mathcal{X}|=n$ and $|\mathcal{Y}|=\lfloor n^{\varepsilon/2}\rfloor$,
we obtain the inequality
\[
\Pr\left[h_a(k)=h_a(l)=0  \bigwedge |E_a^j\cup E_a^k|\leq 8+\frac{4n}{\lfloor n^{\varepsilon/2}\rfloor}\right] \geq
\frac{3}{4n^{\varepsilon}}.
\]
This bound implies that with probability at least $\frac{3}{4n^{\varepsilon}}$ node $a$ receives $\{j,k\}$ from~$j$, $\{j,l\}$ from~$j$ and $\{k,l\}$ from~$k$ at Step 2, and thus 
$T_a$ (i.e., the output of node $a$) includes $t$ with the same probability.
Since the events $t \not\in T_a$ are mutually independent 
over all $a \in A$, we can bound the probability that no node in $A$ finds 
$t$ as follows:
\[
\Pr\left[\bigcap_{a \in A} t \not\in T_a\right] 
\leq \left(1 - \frac{3}{4n^{\varepsilon}}\right)^{|A|} 
\leq \left(1 - \frac{3}{4n^{\varepsilon}}\right)^{n^{\varepsilon}}
\leq e^{-\Omega(1)}.
\]

The round complexity of Algorithm $\Aa_2$ is $O(n^{1-\varepsilon/2})$ since the hash function $h_i$ sent at Step 1 can be encoded using $O(\log n)$ bits, as explained in Section \ref{sec:prelim}.

This concludes the analysis of Algorithm $\Aa_2$ and the proof of Proposition~\ref{prop:tri-heavy}.
\subsection{Finding triangles that are not heavy: Proof of Proposition \ref{prop:tri-light}}\label{sub:light}
The goal of this subsection is to prove Proposition \ref{prop:tri-light}. We first give a brief description of the main ideas underlying our algorithm before giving the full description of Algorithm $\Aa_3$.

\paragraph{Key definition and outline of our approach.}
For any set $\X\subseteq V$, define the set 
\[
\mydelta=\E(V)\setminus \bigcup_{x\in \X}\E(\mathcal{N}(x)),
\]
which is the set of pairs of nodes that do not have a common neighbor in $X$. The following easy lemma shows how the set $\mydelta$ is related to finding triangles which are not $\varepsilon$-heavy.
\begin{lemma}\label{lma:not-heavy-1}
Let $\varepsilon$ be any real number such that $0\le\varepsilon\le 1$.  Suppose that $\X$ is a set obtained by including each node of $V$ into $\X$ with probability $\frac{1}{9n^\varepsilon}$. Then, for any triangle $t\in T(G)\setminus\trih{\varepsilon}$, the three edges of $t$ are in $\mydelta$ with probability at least $2/3$.
\end{lemma}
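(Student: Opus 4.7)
The plan is to reduce the membership condition $e \in \mydelta$ to a statement about common neighbors, and then apply two union bounds with the heaviness assumption.

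First I would unpack the definition of $\mydelta$: an edge $\{j,k\} \in \E$ fails to lie in $\mydelta$ if and only if there is some $x \in \X$ with $\{j,k\}\in \E(\mathcal{N}(x))$, equivalently $x \in \mathcal{N}(j)\cap\mathcal{N}(k)$. Thus the number of potential ``witnesses'' that could force $\{j,k\}$ out of $\mydelta$ is exactly $\multri{\{j,k\}}$, the number of triangles containing $\{j,k\}$.

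Now fix $t=\{j,k,l\}\in T(G)\setminus\trih{\varepsilon}$. By definition of $\varepsilon$-heavy, every edge $e\in t$ satisfies $\multri{e}<n^{\varepsilon}$. Since each node of $V$ is independently included in $\X$ with probability $1/(9n^{\varepsilon})$, a union bound over the at most $n^{\varepsilon}$ common neighbors of $e$ gives
\[
\Pr[e\notin \mydelta]\ \le\ \frac{\multri{e}}{9n^{\varepsilon}}\ <\ \frac{1}{9}.
\]
A second union bound over the three edges of $t$ then yields
\[
\Pr[\text{some edge of }t\text{ is not in }\mydelta]\ <\ 3\cdot \tfrac{1}{9}\ =\ \tfrac{1}{3},
\]
so all three edges lie in $\mydelta$ with probability at least $2/3$, as claimed.

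There is no real obstacle here; the only mild subtlety is to notice that the three events ``$e\notin\mydelta$'' for the three edges of $t$ need not be independent (they share the endpoints $j,k,l$ and possibly overlapping sets of common neighbors), which is precisely why I use a union bound rather than trying to multiply probabilities. I would not attempt a tighter calculation: the constant $1/9$ in the sampling probability was chosen exactly so that a clean union bound delivers the constant $2/3$ stated in the lemma.
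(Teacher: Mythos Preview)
Your proof is correct and follows essentially the same approach as the paper: two union bounds, first over the at most $n^\varepsilon$ common neighbors of each edge to get $\Pr[e\notin\mydelta]<1/9$, then over the three edges of $t$. Your write-up is in fact slightly more detailed than the paper's, which omits the explicit identification of the witnesses with common neighbors and the remark about non-independence.
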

\begin{proof}
Let us consider any triangle $t\in T(G)\setminus\trih{\varepsilon}$. For each edge $e\in t$ we have $\multri{e}< n^\varepsilon$. From the union bound, we obtain the inequality
$
\Pr_\X\left[
e\in 
\bigcup_{x\in \X}\E(\mathcal{N}(x))
\right]<
\frac{1}{9}.
$
Thus, again from the union bound, $t$ has its three edges in $\mydelta$ with probability at least $2/3$.
\end{proof}
In view of Lemma~\ref{lma:not-heavy-1}, our strategy will be to take a set $\X$ as in the lemma and then find the triangles with three edges in $\mydelta$ using the following approach: Each node $i\in V$ sends to all its neighbors the set $\mathcal{N}(i)\cap \X$, in $O(|\X|)$ rounds. Then each node $k\in V$ constructs for each $j\in \mathcal{N}(k)$ the set containing all nodes $l\in \mathcal{N}(k)$ such that $\{j,l\}\in\mydelta$, which can be done locally. Let us call this set $\mathcal{S}(j,k)$ for now -- later it will be called $\mathcal{S}_V^X(j,k)$. Node $k$ then sends $\mathcal{S}(j,k)$ to neighbor~$j$, who can report all triangles of the form $\{j,k,l\}$ with $\{j,l\}\in \mydelta$.

This approach works, but its round complexity depends on the size of the sets $\mathcal{S}(j,k)$. The crucial point is that we can show from a combinatorial argument that with high probability (on the choice of~$X$) the average size of these sets is small. The main remaining issue is that, even if the average size is small, in general there exist pairs $(j,k)$ for which $\mathcal{S}(j,k)$ is large. We solve this issue by sending $\mathcal{S}(j,k)$ only when its size is close to the average and using a different strategy when its size exceeds the average. This latter strategy is based on a notion of ``good'' nodes. Roughly speaking, a node $j$ is good if it has only a small number of neighbors $k$ such that $\mathcal{S}(j,k)$ is large (see Definition~\ref{def} for the formal definition). The same combinatorial argument as above guarantees that most nodes are good (this is shown in Lemma \ref{lma:not-heavy-2} below).  Triangles containing at least one good node are fairly easy to deal with, and we can deal with triangle containing three bad nodes by applying the same approach recursively on the subgraph induced by the bad nodes. Our final algorithm will thus be recursive: it will keep a set $U\subseteq V$, with initially $U=V$, and successively search for triangles in the subgraph of $G$ induced by $U$, removing the good nodes from $U$ at each step, until $U=\emptyset$.

\paragraph{Further definitions and full description of Algorithm $\Aa_3$.}
For any set $U\subseteq V$ and
any nodes $j,k\in U$ such that $\{j,k\}\in E$, we define the set $\myS{j,k}\subseteq U$
as follows:
\begin{align*}
\myS{j,k}&=
\Big\{l\in U\:|\:  \{j,l\}\in \mydelta \textrm { and } \{k,l\}\in E\Big\}.
\end{align*}
Note that this definition is asymmetric: we have $\myS{j,k}\neq \myS{k,j}$ in general.
Let $\m$ be any positive real number. For any node $j\in U$, we define the set
\[
\myT{j}=
\big\{
k\in U\:|\:
\{j,k\}\in E 
\textrm{ and }
\left|
\myS{j,k}
\right|
> 
\m
\big\}.
\]
Finally, we will use the following concept of good nodes. 
\begin{definition}\label{def}
Let $U$ and $\X$ be any subsets of $V$, and
$\m$ be any positive real number.
A node $j\in U$ is $\m$-good for $(U,\X)$ if
$
\big|
\myT{j}
\big|
\le \m.
$
\end{definition}

The following crucial lemma shows an upper bound on the number of nodes which are not good, when $\X$ is chosen as in Lemma~\ref{lma:not-heavy-1}. 

\begin{lemma}\label{lma:not-heavy-2}
Let $\varepsilon$ be any real number such that $0\le\varepsilon\le 1$  and $\X$ be a set chosen at random as in Lemma \ref{lma:not-heavy-1}.
Then, for any real number $\m\ge \sqrt{54n^{1+\varepsilon}\log n}$, the following statement holds with probability at least $1-1/n$:
\begin{equation} \label{eq1}
\parbox{75mm}{
For any set $U\subseteq V$ there are at most $|U|/2$ nodes in $U$  that are not $\m$-good for $(U,\X)$. }
\end{equation}

\end{lemma}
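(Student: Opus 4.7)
The plan is to combine a deterministic double-counting step with a high-probability bound on a single auxiliary quantity. For each $j \in V$, define
\[
W_j \;:=\; \sum_{\substack{l \in V\setminus\{j\} \\ \{j,l\}\in\mydelta}}|\mathcal{N}(j)\cap\mathcal{N}(l)|.
\]
Interpreting $\sum_{j\in U}\sum_{k\in U:\{j,k\}\in E}|\myS{j,k}|$ as a count of ordered triples $(j,k,l)\in U^3$ with $\{j,k\},\{k,l\}\in E$ and $\{j,l\}\in\mydelta$, then re-summing by fixing $(j,l)$ first and bounding the number of eligible $k$'s by the full count $|\mathcal{N}(j)\cap\mathcal{N}(l)|$, yields the deterministic key inequality
$\sum_{j\in U}\sum_{k\in U:\{j,k\}\in E}|\myS{j,k}| \le \sum_{j\in U} W_j$,
valid for every $U \subseteq V$. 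Furthermore, if $j \in U$ is not $r$-good then $|\myT{j}|>r$ and $|\myS{j,k}|>r$ for every $k\in\myT{j}$, so the inner sum (over $k$) on the left strictly exceeds $r^2$. Writing $B(U)$ for the number of non-good nodes in $U$, this gives $B(U)\cdot r^2 < \sum_{j\in U} W_j$.

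It therefore suffices to show that, with probability at least $1 - 1/n$, the bound $W_j \le r^2/2$ holds for every $j \in V$ simultaneously: then $B(U)\cdot r^2 < |U|\cdot r^2/2$ for every $U$, hence $B(U) \le |U|/2$, as desired. To bound the $W_j$'s, I would partition pairs $\{j,l\}$ by the common-neighbor count $c_{j,l}:=|\mathcal{N}(j)\cap\mathcal{N}(l)|$ at the threshold $T := 27 n^{\varepsilon}\log n$. Since $\X$ is obtained by including each node independently with probability $p = 1/(9n^\varepsilon)$, any pair with $c_{j,l}\ge T$ satisfies $\Pr[\{j,l\}\in\mydelta] \le (1-p)^{T} \le e^{-pT} = e^{-3\log n} \le n^{-3}$, so a union bound over the $\binom{n}{2}$ pairs shows that with probability at least $1-1/n$ no such ``heavy'' pair lies in $\mydelta$. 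Under this event, for every $j \in V$ only pairs with $c_{j,l}<T$ can contribute to $W_j$, and thus $W_j < (n-1)T \le 27 n^{1+\varepsilon}\log n \le r^2/2$ by the hypothesis $r \ge \sqrt{54 n^{1+\varepsilon}\log n}$.

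The main technical point is the $W_j$ bound: $W_j$ is a sum of dependent indicators (all driven by $\X$), so a direct concentration argument on each $W_j$ would be delicate. The clean resolution is that a single event about $\X$ --- ``no pair with many common neighbors escapes $\mydelta$'' --- deterministically controls every $W_j$ simultaneously and for every choice of $U$, and this event is handled easily by a union bound over pairs.
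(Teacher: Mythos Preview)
Your proposal is correct and follows essentially the same argument as the paper's proof. Both identify the single high-probability event ``every pair $\{j,l\}$ with $\multri{\{j,l\}}\ge 27n^{\varepsilon}\log n$ is not in $\mydelta$'' (shown by the same union bound over $\binom{n}{2}$ pairs), and then use the same double-counting of ordered triples $(j,k,l)$ with $\{j,k\},\{k,l\}\in E$ and $\{j,l\}\in\mydelta$ together with a Markov-type step. The only cosmetic difference is that you package the triple count as $\sum_{j\in U}W_j$ and bound each $W_j$ individually by $(n-1)\cdot 27n^{\varepsilon}\log n$, whereas the paper keeps the global sum and bounds it by $27|U|^2 n^{\varepsilon}\log n$; both routes give $B(U)\le |U|/2$ under the hypothesis $r\ge\sqrt{54n^{1+\varepsilon}\log n}$.
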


\begin{proof}
Consider any pair $\{j,l\}\in\E$ such that 
\[
\multri{\{j,l\}}\ge 27n^\varepsilon\log n.
\]
We have 
\[
\Pr_\X\left[
\{j,l\}\notin 
\bigcup_{x\in \X}\E(\mathcal{N}(x))
\right]\le
\left(1-\frac{1}{9n^\varepsilon}\right)^{27n^\varepsilon\log n}
\le
\frac{1}{n^3}.
\]
From the union bound we can thus conclude that with probability at least $1-1/n$ the following statement holds:
\begin{equation}\label{eq1b}
\multri{\{j,l\}}
<  27n^\varepsilon\log n
\textrm{
for any pair $\{j,l\}\in \mydelta$. 
}
\end{equation}

We now show that Statement (\ref{eq1b}) implies Statement (\ref{eq1}).
Let us consider the number of ordered triples $(j,k,l)\in U\times U\times U$ such that $\{j,l\}\in \mydelta$, $\{j,k\}\in E$ and $\{k,l\}\in E$. If Statement~(\ref{eq1b}) holds then this number is at most $27|U|^2n^\varepsilon \log n$.
This number of triples can also be computed in another way: it is equal to
\[
\sum_{\begin{subarray}{c}(j,k)\in U\times U\\\textrm{s.t. } \{j,k\}\in E\end{subarray}}\left|\myS{j,k}\right|.
\]
Thus if Statement~(\ref{eq1b}) holds we obtain the inequality
\[
27|U|^2n^\varepsilon \log n\ge \sum_{\begin{subarray}{c}(j,k)\in U\times U\\\textrm{s.t. } \{j,k\}\in E\end{subarray}}\left|\myS{j,k}\right| 
\ge \sum_{j\in U} \left|\myT{j}\right|\m,
\]
which implies that the number of nodes that are not $\m$-good for $(U,\X)$ is at most
\[
27|U|^2n^\varepsilon \log n\times \frac{1}{\m^2}\le  \frac{27|U|n^{1+\varepsilon} \log n}{\m^2}.
\]
For any $\m\ge \sqrt{54n^{1+\varepsilon}\log n}$ we obtain the claimed upper bound.
\end{proof}

We now present an algorithm that lists all triangles of $G$ with three edges in $\mydelta$. The algorithm is denoted $\Aa(\X,\m)$ and described in Figure \ref{fig:algorithm}. We analyze it in the next proposition.

\begin{figure*}[ht]
\begin{center}
\fbox{
\begin{minipage}{15 cm} 
\begin{itemize}
\item[1.]
Each node tells all its neighbors whether it is in $\X$ or not.
\item[2.]
Each node $k\in V$ sends the set $\mathcal{N}(k)\cap \X$ to all its neighbors.
\item[3.]
$U\gets V$. 
\item[4.]
While $U\neq\emptyset$ do:
\begin{itemize}
\item[4.1]
Each node $k\in U$ sends the set $\myS{j,k}$ to each neighbor $j\in \mathcal{N}(k)\cap U$ for which $|\myS{j,k}|\le \m$. Each such neighbor $j$ then lists all triangles containing $j,k$ and a third node in $\myS{j,k}$, by locally computing the set $\myS{j,k}\cap \mathcal{N}(j)$.
\item[4.2]
Each node $k\in U$ decides whether itself is $\m$-good for $(U,\X)$ or not.
Let $U'\subseteq U$ denote the set of nodes that are $\m$-good for $(U,\X)$. 
\item[4.3]
Each node $j\in U'$ of the network sends the set $\myT{j}$ to each neighbor $l\in\mathcal{N}(j)\cap U$. Each such neighbor $l$ then lists all triangles containing $j$, $l$ and a third node in $\myT{j}$, by locally computing the set $\myT{j}\cap\mathcal{N}(l)$.
\item[4.4]
$U\gets U\setminus U'$. 
\item[4.5]
Each node tells all its neighbors whether it is in $U$ or not.
\end{itemize}
\end{itemize}
\end{minipage}
}
\end{center}
\caption{Algorithm $\Aa(\X,\m)$ that lists all triangles in $G$ with three edges in $\mydelta$. Here $\X$ and~$\m$ are parameters of the algorithm: $\X$ is a subset of nodes and $\m$ is a positive real number. The set~$X$ is given as follows: each node $i\in V$ knows whether $i\in \X$ or not. The number $\m$ is given to each node.}\label{fig:algorithm}
\end{figure*}

\begin{proposition}\label{prop:tri-delta}
Assume that the set $\X$ satisfies Statement (\ref{eq1}).
Then Algorithm $\Aa(\X,\m)$ terminates after $O(\log n)$ iterations of the while loop, has overall round complexity $O(|\X|+\m\log n)$, and its output $T\subseteq T(G)$ includes all the triangles of $G$ with three edges in $\mydelta$. 
\end{proposition}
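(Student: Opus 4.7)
The plan is to verify the three claims (termination, round complexity, correctness) in that order, each time extracting what is needed from the algorithm design or from Lemma~\ref{lma:not-heavy-2}. The hypothesis that $\X$ satisfies Statement~(\ref{eq1}) enters only through termination; the correctness argument is purely structural.

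Termination in $O(\log n)$ iterations is immediate from Statement~(\ref{eq1}): at every pass through the while loop, at least $|U|/2$ of the nodes are $\m$-good for $(U,\X)$, and all of them are added to $U'$ and then removed in Step~4.4, so $|U|$ at least halves each iteration.

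For the round complexity I would bound each iteration by $O(\m)$ rounds and add $O(|\X|)$ rounds for Step~2. Step~2 broadcasts $\mathcal{N}(k)\cap \X$ along each incident edge, which has size at most $|\X|$. Inside the loop, Step~4.1 sends $\myS{j,k}$ only when $|\myS{j,k}|\le \m$ (together with a single flag bit otherwise, so that $j$ can reconstruct $\myT{j}$), and Step~4.3 sends $\myT{j}$ for $j\in U'$, whose size is at most $\m$ by the definition of $\m$-good. Steps~1, 4.2, 4.4, and~4.5 cost $O(1)$ rounds each. Multiplying by the $O(\log n)$-iteration bound yields $O(|\X|+\m\log n)$. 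I would pause briefly to note that the broadcasts in Steps~1--2 give every node $k$ the set $\mathcal{N}(j)\cap \X$ for each neighbor $j$, which is precisely what $k$ needs to decide whether $\{j,l\}\in\mydelta$ whenever $j,l\in\mathcal{N}(k)$, and therefore to form $\myS{j,k}$ and its own $\myT{k}$ locally.

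For correctness, fix any triangle $t=\{j,k,l\}\in T(G)$ whose three edges all lie in $\mydelta$, and consider the first iteration in which some vertex of $t$ leaves $U$; after renaming, this vertex is $j$, so $j\in U'$ while $k,l\in U$ at that iteration. I split on whether $\{k,l\}\subseteq \myT{j}$. If, say, $k\notin \myT{j}$, then by definition $|\myS{j,k}|\le \m$, and since $\{j,l\}\in\mydelta$, $\{k,l\}\in E$, and $l\in U$, we have $l\in \myS{j,k}$; hence in Step~4.1 node $j$ receives $\myS{j,k}$, computes $\myS{j,k}\cap\mathcal{N}(j)\ni l$, and outputs $t$. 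If both $k$ and $l$ lie in $\myT{j}$, then at Step~4.3 node $j$ sends $\myT{j}$ to its neighbor $k\in U$, and $k$ forms $\myT{j}\cap \mathcal{N}(k)\ni l$ and outputs $t$. Either way $t$ is listed.

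The step I expect to require the most care is purely bookkeeping rather than conceptual: tracking which node owns which set in the face of the asymmetry $\myS{j,k}\neq \myS{k,j}$ and confirming that the data distributed in Steps~1--2 is enough to support the local computations in Steps~4.1--4.3 (including the self-assessment of $\m$-goodness). All genuinely combinatorial content---why $|U|$ halves and why the message sizes in Steps~4.1 and~4.3 are bounded by $\m$---has already been absorbed into Lemma~\ref{lma:not-heavy-2} and the definition of $\m$-good, so what remains is a direct two-case analysis.
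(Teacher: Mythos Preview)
Your proof is correct and follows essentially the same approach as the paper's: halving of $|U|$ via Statement~(\ref{eq1}) for termination, per-step message-size bounds for round complexity, and a case analysis for correctness. The only cosmetic difference is that the paper classifies triangles at each iteration into three types---found at Step~4.1, found at Step~4.3, or deferred (all three vertices bad) to the next iteration---whereas you collapse this by jumping directly to the first iteration at which some vertex of $t$ becomes $\m$-good.
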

\begin{proof}
Let us first show that Algorithm $\Aa(\X,\m)$ terminates after $O(\log n)$ iterations of the while loop. The assumption on the set $\X$ implies that
$
|U\setminus U'|\le |U|/2
$
at Step 4.2 of each iteration of the while loop. 
Since we start with $|U| = n$, after at most $\log_2(n)+1$ iterations of the while loop the set $U$ becomes empty, at which point we exit the while loop. 

We now analyze the overall round complexity of the algorithm by considering the complexity of each step.  Step~1 requires only one round. Note that after Step~1 each node $k$ can locally compute the set  $\mathcal{N}(k)\cap \X$.
Step~2 can then be implemented using at most $|\X|$ rounds, since $|\mathcal{N}(k)\cap \X|\le |\X|$. Step~4.1 can be implemented using at most $\m$ rounds. 
Step~4.2 does not require any communication since the enough information has been obtained at~Step 4.1 to compute $\myT{k}$ locally. Step~4.3 only requires at most $\m$ rounds, since $|\myT{j}|\le \m$ for any $j\in U'$. 
Finally, Step~4.5 obviously requires only one round. The overall round complexity is thus at most
\[
|\X|+1+(\log_2(n)+1)(\m+\m+1)= O(|\X|+\m\log n).
\] 

We finally show that the algorithm lists all triangles of $G$ with three edges in $\mydelta$. 
First observe that for any set $U\subseteq V$, any triangle of $G$ with three nodes in $U$ and three edges in $\mydelta$ satisfies at least one of the following three properties (where $U'\subseteq U$ denotes the set of nodes that are $\m$-good for $(U,\X)$):
\begin{itemize}
\item[(a)]
it contains three distinct nodes $j,k,l\in U$ such that
$\left|\myS{j,k}
\right|
\le 
\m$ and $l\in \myS{j,k}$;
\item[(b)]
it contains two distinct nodes $j,k\in U$ such that $j\in U'$ and $k \in\myT{j}$;\item[(c)]
all its three nodes are in $U\setminus U'$.
\end{itemize}
Indeed, if such a triangle does not satisfy Property~(a) then the inequality $|\myS{j,k}|>\m$ holds for any two nodes~$j$ and~$k$ of the triangle (which means that $k\in\myT{j}$). In that case it should satisfy either Property~(b) or Property~(c).

Algorithm $\Aa(\X,\m)$ starts with $U=V$ and decreases the set $U$ until it is empty. Let us consider what happens at each execution of the while loop, with the current set $U$.
All triangles of type~(a) are found at Step 4.1 and all triangles of type~(b) are found at Step~4.3. The only remaining triangles are triangles of type~(c). They are considered at the next execution of the while loop, where $U$ is replaced by $U\setminus U'$.  
\end{proof}

We are now ready to present Algorithm $\Aa_3$ and give the proof of Proposition \ref{prop:tri-light}.
\begin{proof}[Proof of Proposition \ref{prop:tri-light}]
Algorithm $\Aa_3$ is as follows. First each node selects itself with probability $\frac{1}{9n^{\varepsilon}}$. Then the nodes run Algorithm $\Aa(\X,\m)$ with $X$ being the set of selected nodes (note that each node $i$ of the network knows whether $i\in \X$ or not, as required) and $\m=\sqrt{54n^{1+\varepsilon}\log n}$, but stop as soon as the round complexity exceeds $c\times (n^{1-\varepsilon}+n^{(1+\varepsilon)/2}\log n)$ for some large enough constant~$c$. 

By construction, the round complexity of  Algorithm $\Aa_3$ is 
\[O(n^{1-\varepsilon}+n^{(1+\varepsilon)/2}\log n).\] 

Let $t$ be any triangle in $T(G)\setminus\trih{\varepsilon}$, and let us show that Algorithm $\Aa_3$ finds $t$ with constant probability.
Note that the expectation of the value $|\X|$ is $\frac{1}{9}n^{1-\varepsilon}$. By Chernoff bound, we know that the probability that $|\X|> \frac{2}{9}n^{1-\varepsilon}$ is negligible.
Combining this observation with Lemmas~\ref{lma:not-heavy-1} and~\ref{lma:not-heavy-2}, we conclude (using the union bound) that with probability $\Omega(1)$ the following three conditions hold: $\X$ has size at most $\frac{2}{9}n^{1-\varepsilon}$, $\X$ satisfies Statement~(\ref{eq1}), and $t$ has its three edges in $\mydelta$. From Proposition~\ref{prop:tri-delta} we know that in this case algorithm $\Aa(X,\m)$ stops within $O(n^{1-\varepsilon}+n^{(1+\varepsilon)/2}\log n)$ rounds and finds~$t$. By choosing the constant $c$ large enough we can thus guarantee that the output of Algorithm $\Aa_3$ contains the triangle $t$ with probability $\Omega(1)$. 
\end{proof}

\section{Lower Bounds}\label{sec:LB}
This section proves the following lower-bound theorem.

\begin{theorem} \label{thm:lowerBound}
Let $\mathcal{A}$ be any triangle listing algorithm with error probability less than $1/32$. Then there exists a probability distribution on inputs such that the expected round complexity of $\mathcal{A}$ is~$\Omega(n^{1/3}/\log n)$. 
\end{theorem}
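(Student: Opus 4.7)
The plan is to apply Yao's minimax principle and exhibit a hard input distribution $\mathcal{D}$ for which every deterministic triangle-listing algorithm with error at most $1/32$ on $\mathcal{D}$ uses expected round complexity $\Omega(n^{1/3}/\log n)$. I would take $\mathcal{D}$ to be the Erdős--Rényi random graph $G(n,p)$ with $p = c\, n^{-2/9}$ for a suitable constant $c$: a first-moment calculation gives $\mathbb{E}[|T(G)|] = \Theta(n^{7/3})$, and a Janson-type or second-moment concentration argument delivers $|T(G)| = \Theta(n^{7/3})$ with probability $1-o(1)$. This is the distribution the bound will be proved against.

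Next I would fix any deterministic algorithm $\mathcal{A}$ with expected round complexity $R$ on $\mathcal{D}$ and error at most $1/32$. A Markov cut-off at $r \le 4R$ combined with the error bound yields a constant-probability good event $\mathcal{E}$ on which $\mathcal{A}$ terminates within $4R$ rounds and outputs exactly $T(G)$; on $\mathcal{E}$ the total number of bits communicated across the CONGEST clique is at most $B = O(Rn^{2}\log n)$. The heart of the proof is an information-theoretic charging argument. For each triangle $t = \{u,v,w\}$ that some node $x$ outputs on $\mathcal{E}$, at most two of the three edges of $t$ are incident to $x$, so there is a canonical edge $e(t,x) \subseteq t$ that $x$ does not know initially; the algorithm must have communicated information that lets $x$ become certain of the presence of $e(t,x)$. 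Charging each listed triangle to the knowledge pair $(x, e(t,x))$, every triangle of $T(G)$ demands at least one such pair, and the total number of distinct (node, edge) knowledge pairs that $B$ bits of CONGEST-clique communication can possibly create is $O(B)$ (each received bit can reveal at most one new ``edge bit'' to one node). Combined with $|T(G)| = \Omega(n^{7/3})$, this gives $R n^{2} \log n = \Omega(n^{7/3})$, hence $R = \Omega(n^{1/3}/\log n)$, and Yao's principle transports the bound to randomized algorithms.

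The hard part will be justifying the injectivity clause ``each listed triangle demands at least one distinct knowledge pair,'' since in $G(n, n^{-2/9})$ each edge lies in $n p^{2} = n^{5/9}$ triangles in expectation and a single learned edge could in principle underwrite many triangle outputs at once. To close this loophole I would use that, with high probability over $G \sim \mathcal{D}$, the 3-uniform hypergraph of triangles of $G$ is nearly linear (any two triangles share at most one edge whp), and combine this with the $1/32$ error budget via a Fano-style calculation to rule out that a single knowledge pair $(x,e)$ can simultaneously certify more than $O(\log n)$ triangles without forcing an error somewhere in $\mathcal{A}$'s output. This $\log n$ amortisation is precisely what appears in the denominator of the final $\Omega(n^{1/3}/\log n)$ bound. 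Routine applications of Markov's inequality, a union bound over triangles and edges, and the concentration of $|T(G)|$ then compose with the charging argument to yield the claimed expected-round lower bound.
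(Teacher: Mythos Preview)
Your charging argument has a genuine gap that the proposed fixes do not close. The difficulty is not that two triangles might share two edges; it is that a \emph{single} edge can be the canonical non-incident edge for many triangles output by the same node. In $G(n,n^{-2/9})$ every edge lies in $\Theta(np^{2})=\Theta(n^{5/9})$ triangles in expectation, so a node $x$ that has already learned the edges from $u$ and $v$ to a large set $W$ and then learns the single edge $\{u,v\}$ can correctly output $|W|$ triangles, all charged to the same pair $(x,\{u,v\})$. Near-linearity of the triangle hypergraph (any two triangles share at most one edge) is irrelevant to this, and I do not see what a Fano-type inequality would say here: the multiplicity is a combinatorial property of the instance, not something the $1/32$ error budget constrains. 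Your final accounting also mislocates the logarithm: the $\log n$ in the denominator of $\Omega(n^{1/3}/\log n)$ comes from the $O(\log n)$-bit message size in the CONGEST (clique) model, not from any amortisation inside the charging.

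The paper sidesteps the injectivity problem entirely. It works over $G(n,1/2)$ and replaces your would-be linear charging by a \emph{sublinear} one: Rivin's inequality (Lemma~\ref{lma:triangleEdge}) guarantees that any collection of $t$ triangles spans at least $\Omega(t^{2/3})$ distinct edges. The argument then focuses on the single node $w(\vecb{T})$ with the largest output rather than summing over all nodes; this node must list $\Omega(n^{2})$ triangles with constant probability, so by Rivin its output $\vecb{T}_{w(\vecb{T})}$ determines $\Omega(n^{4/3})$ distinct edges, and Lemma~\ref{lma:infoLowerBoundGeneral} converts this into $I(\vecb{E};\vecb{T}_{w(\vecb{T})})=\Omega(n^{4/3})$. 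Subtracting the $n-1$ bits of initial knowledge and applying the data-processing inequality yields $H(\vecb{\pi}_{w(\vecb{T})})=\Omega(n^{4/3})$, hence $\Omega(n^{1/3}/\log n)$ rounds. The $t\mapsto t^{2/3}$ edge--triangle tradeoff is exactly the ingredient your plan is missing, and it is tight against the Dolev--Lenzen--Peled upper bound, in which each node learns $\Theta(n^{4/3})$ edges and outputs $\Theta(n^{2})$ triangles.
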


In this section we will write $V=\{0,1,\ldots,n-1\}$. 
Without loss of generality, the run of any algorithm $\mathcal{A}$ for triangle listing on any given instance $G = (V, E)$ can be described by two following steps:
\begin{enumerate}
\item 
Each node $i$ locally constructs its initial state $\vecb{\rho}_i$ according to $G$. This state depends on the set of edges incident to $i$, but is independent of all other edges.
\item 
Each node $i$ construct its output $\vecb{T}_i$ from $\vecb{\rho}_i$ and the transcript $\vecb{\pi}_i$ of the communication received it receives during the execution of the algorithm.
\end{enumerate}

To prove Theorem~\ref{thm:lowerBound}, we consider as input the random graph $G(n, 1/2)$, which is the graph on~$n$ nodes where each pair of nodes is independently taken as an edge with probability $1/2$. Since $\vecb{\rho}_i$, $\vecb{\pi}_i$, and $\vecb{T}_i$ depend only on the algorithm $\mathcal{A}$ (and its random bits if $\mathcal{A}$ is randomized) and the input $G$, we see all of them as random variables. For any pair $\{j,k\}\in \E$, let $\vecb{e}_{\{j,k\}}$ be the random variable with value 1 if $\{j,k\}$ is an edge, and value 0 otherwise. Let $\vecb{E}=(\vecb{e}_{\{0,1\}}, \vecb{e}_{\{0,2\}}, \dots, \vecb{e}_{\{n-2,n-1\}})$ be the concatenation of these $|\E|$ random variables. The proof idea is to bound the amount of information in $\vecb{\pi}_i$ necessary to construct $\vecb{T}_i$ locally. We write $\vecb{T}=(\vecb{T}_0,\vecb{T}_1,\ldots,\vecb{T}_{n-1})$ and use the notation $w(\vecb{T})$ to represent the node identifier $i$ such that $|T_i|$ is maximum (i.e., the index of the node that outputs the maximum number of triangles).

Our proof relies on the following graph-theoretic lemma, which is shown 
in~\cite{Rivin02}.

\begin{lemma} \label{lma:triangleEdge}
If a graph $G$ contains $t$ triangles, then $G$ has at least 
$\frac{\sqrt{2}}{3}t^{2/3}$ edges.
\end{lemma}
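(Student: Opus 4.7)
The plan is to pass through the adjacency matrix of $G$ and relate both quantities of interest to traces of its powers. Let $A \in \{0,1\}^{n \times n}$ denote the adjacency matrix of $G$ and $\lambda_1 \geq \cdots \geq \lambda_n$ its (real) eigenvalues. The first ingredient is the standard identity $\mathrm{tr}(A^2) = \sum_i \lambda_i^2 = 2m$, obtained by noting that $(A^2)_{ii} = \deg(i)$ and summing. The second is $\mathrm{tr}(A^3) = \sum_i \lambda_i^3 = 6t$, since every triangle contributes exactly six closed walks of length three and no other closed walks of length three exist.

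The second step is to control $\mathrm{tr}(A^3)$ by a power of $\mathrm{tr}(A^2)$. The key observation is that each $\lambda_i^2$ is non-negative while every $\lambda_i$ is at most $\lambda_{\max}$, so $\lambda_i^3 = \lambda_i \cdot \lambda_i^2 \leq \lambda_{\max}\,\lambda_i^2$ term by term; summing over $i$ gives $\sum_i \lambda_i^3 \leq \lambda_{\max}\sum_i \lambda_i^2$. Combined with $\lambda_{\max}^2 \leq \sum_i \lambda_i^2 = 2m$ (again because every $\lambda_i^2 \geq 0$), this yields $6t \leq \sqrt{2m}\cdot 2m = (2m)^{3/2}$, i.e.\ $t \leq \tfrac{\sqrt{2}}{3} m^{3/2}$. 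Inverting, one obtains $m \geq \bigl(\tfrac{3}{\sqrt{2}}\bigr)^{2/3}\, t^{2/3}$, which is strictly stronger than $\tfrac{\sqrt{2}}{3}\, t^{2/3}$ and so in particular implies the stated bound.

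I do not foresee any serious obstacle: both trace identities and the spectral inequality are elementary, and the only point demanding care is that $A$ is real symmetric, which makes the eigenvalues real and justifies manipulating $\sum_i \lambda_i^3$ as a real sum. A more combinatorial alternative is to start from $3t = \sum_{\{u,v\} \in E} \multri{\{u,v\}}$ and apply Cauchy--Schwarz together with $\multri{\{u,v\}} \leq \sqrt{\deg(u)\deg(v)}$; this route also works but requires an additional bound on $\sum_v \deg(v)^2$ and gives a somewhat messier constant, so the spectral argument is the cleaner route and the one I would write up.
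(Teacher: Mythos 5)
Your proof is correct: the trace identities $\mathrm{tr}(A^2)=2m$ and $\mathrm{tr}(A^3)=6t$, the termwise bound $\lambda_i^3\le\lambda_{\max}\lambda_i^2$, and $\lambda_{\max}\le\sqrt{2m}$ together give $6t\le(2m)^{3/2}$, and the resulting constant $\bigl(3/\sqrt{2}\bigr)^{2/3}$ indeed dominates the stated $\sqrt{2}/3$. The paper gives no proof of this lemma at all---it only cites Rivin~\cite{Rivin02}, whose argument is precisely this kind of spectral/trace bound---so your write-up is essentially the same approach as the cited source, just made self-contained.
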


We first recall the definition of the entropy. For a random variable $\vecb{X}$ with domain $\mathcal{X}$ and probability distribution $p$, its entropy $H(\vecb{X})$ is defined as $H(\vecb{X}) = - \sum_{x \in \mathcal{X}} p(x)\log p(x)$. The conditional entropy can be defined similarly. For any random variables $\vecb{X}$ and $\vecb{Y}$ the mutual information of $\vecb{X}$ and $\vecb{Y}$ is defined as $I(\vecb{X}; \vecb{Y}) = H(\vecb{X})-H(\vecb{X}|\vecb{Y})$. We will use in our proofs the following standard results from information theory about the mutual information.
\begin{fact} \label{fact:mutualInfo} 
For any random variables $\vecb{X}$, $\vecb{Y}$, and $\vecb{Z}$, the following three 
properties hold:
\begin{itemize}
\item $I(\vecb{X}; \vecb{Y}) = I(\vecb{Y}; \vecb{X})$,
\item $I(\vecb{X}; \vecb{Y}) \leq H(\vecb{X})$ (and $I(\vecb{X}; \vecb{Y}) \leq H(\vecb{Y})$), and
\item $I(\vecb{X}; \vecb{Y}) \geq I(\vecb{X}; (\vecb{Y}, \vecb{Z})) - 
I( \vecb{X}; \vecb{Z})$.
\end{itemize}
\end{fact}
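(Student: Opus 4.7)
The plan is to derive all three properties directly from the definition $I(\vecb{X};\vecb{Y}) = H(\vecb{X}) - H(\vecb{X}|\vecb{Y})$, together with the chain rule for joint entropy, $H(\vecb{X},\vecb{Y}) = H(\vecb{X}) + H(\vecb{Y}|\vecb{X}) = H(\vecb{Y}) + H(\vecb{X}|\vecb{Y})$, and the non-negativity of (conditional) entropy for discrete random variables. These are the only information-theoretic tools I will need.

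For the first property (symmetry), I would substitute $H(\vecb{X}|\vecb{Y}) = H(\vecb{X},\vecb{Y}) - H(\vecb{Y})$ into the definition to obtain the symmetric formula
\[
I(\vecb{X};\vecb{Y}) \;=\; H(\vecb{X}) + H(\vecb{Y}) - H(\vecb{X},\vecb{Y}).
\]
Since this expression is manifestly symmetric in its two arguments and $H(\vecb{X},\vecb{Y}) = H(\vecb{Y},\vecb{X})$, swapping $\vecb{X}$ and $\vecb{Y}$ leaves it unchanged, giving $I(\vecb{X};\vecb{Y}) = I(\vecb{Y};\vecb{X})$.

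For the second property (the two entropy upper bounds), I would use that conditional entropy of a discrete random variable is always non-negative, since it is an average (over $\vecb{Y}$) of entropies of conditional distributions. Hence $H(\vecb{X}|\vecb{Y}) \geq 0$, which immediately yields $I(\vecb{X};\vecb{Y}) = H(\vecb{X}) - H(\vecb{X}|\vecb{Y}) \leq H(\vecb{X})$. The companion bound $I(\vecb{X};\vecb{Y}) \leq H(\vecb{Y})$ then follows from the symmetry established in the first property.

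For the third property, my plan is to apply the chain rule for mutual information, which follows from the joint-entropy chain rule applied twice: $I(\vecb{X};(\vecb{Y},\vecb{Z})) = I(\vecb{X};\vecb{Z}) + I(\vecb{X};\vecb{Y}\mid \vecb{Z})$. Subtracting $I(\vecb{X};\vecb{Z})$ from both sides reduces the claim to the bound $I(\vecb{X};\vecb{Y}) \geq I(\vecb{X};\vecb{Y}\mid \vecb{Z})$. I would then expand both mutual informations into entropies and derive the needed comparison from the fact that additional conditioning cannot increase entropy, i.e.\ $H(\vecb{Y}\mid \vecb{X},\vecb{Z}) \leq H(\vecb{Y}\mid \vecb{X})$ and related monotonicity statements. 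This last step is the most delicate of the three parts and is where I expect the main technical obstacle to lie, since the correct manipulation of the four entropy terms is easy to get wrong; all other steps are essentially bookkeeping.
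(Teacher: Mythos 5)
Your treatment of the first two bullets is correct and entirely standard: rewriting $I(\vecb{X};\vecb{Y})=H(\vecb{X})+H(\vecb{Y})-H(\vecb{X},\vecb{Y})$ gives symmetry, and non-negativity of conditional entropy gives both upper bounds. Note that the paper offers no proof of this Fact (it is invoked as a standard result), so the only question is whether your argument is sound on its own.

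The third bullet is where your plan fails, and you correctly identified it as the dangerous step. Your chain-rule reduction $I(\vecb{X};(\vecb{Y},\vecb{Z})) - I(\vecb{X};\vecb{Z}) = I(\vecb{X};\vecb{Y}\mid\vecb{Z})$ is right, but the inequality it leaves you with, $I(\vecb{X};\vecb{Y}) \geq I(\vecb{X};\vecb{Y}\mid\vecb{Z})$, is not a theorem: conditioning can increase mutual information. Take $\vecb{X},\vecb{Y}$ independent uniform bits and $\vecb{Z}=\vecb{X}\oplus\vecb{Y}$; then $I(\vecb{X};\vecb{Y})=0$ and $I(\vecb{X};\vecb{Z})=0$, yet $I(\vecb{X};(\vecb{Y},\vecb{Z}))=H(\vecb{X})=1$, so the third bullet itself is false for general random variables --- no manipulation of the monotonicity $H(\vecb{Y}\mid\vecb{X},\vecb{Z})\leq H(\vecb{Y}\mid\vecb{X})$ can rescue it, since $I(\vecb{X};\vecb{Y})-I(\vecb{X};\vecb{Y}\mid\vecb{Z})$ is the interaction information, which has no fixed sign. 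The statement that is actually true, and that the paper's argument really uses (the chain of inequalities in the proof of Theorem~\ref{thm:lowerBound} immediately relaxes $I(\vecb{E};\vecb{\rho}_{w(\vecb{T})})$ to $H(\vecb{\rho}_{w(\vecb{T})})$ in the very next line), is the weaker bound $I(\vecb{X};\vecb{Y}) \geq I(\vecb{X};(\vecb{Y},\vecb{Z})) - H(\vecb{Z})$. You get it by expanding the chain rule in the other order, $I(\vecb{X};(\vecb{Y},\vecb{Z})) = I(\vecb{X};\vecb{Y}) + I(\vecb{X};\vecb{Z}\mid\vecb{Y})$, and bounding $I(\vecb{X};\vecb{Z}\mid\vecb{Y}) \leq H(\vecb{Z}\mid\vecb{Y}) \leq H(\vecb{Z})$. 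So the right conclusion is not that your proof has a fixable gap, but that the bullet as stated cannot be proved; you should prove and use the $H(\vecb{Z})$ version instead, which is all the downstream lower-bound argument needs.
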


\begin{fact}[Data Processing Inequality] \label{fact:DataProcessing}
For any three random variables $\vecb{X}, \vecb{Y}, \vecb{Z}$ such that $\vecb{X}$ 
and $\vecb{Z}$ are conditionally independent given $\vecb{Y}$, $I(\vecb{X}; \vecb{Y}) 
\geq I(\vecb{X}; \vecb{Z})$.
\end{fact}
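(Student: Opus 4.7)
The plan is to derive the Data Processing Inequality from the chain rule for mutual information together with the non-negativity of conditional mutual information. First I would express the joint mutual information $I(\vecb{X}; (\vecb{Y}, \vecb{Z}))$ in two different ways by applying the chain rule along each of the two coordinates:
\[
I(\vecb{X}; (\vecb{Y}, \vecb{Z})) \;=\; I(\vecb{X}; \vecb{Y}) + I(\vecb{X}; \vecb{Z} \mid \vecb{Y}) \;=\; I(\vecb{X}; \vecb{Z}) + I(\vecb{X}; \vecb{Y} \mid \vecb{Z}).
\]
Equating the two right-hand sides gives the key identity
\[
I(\vecb{X}; \vecb{Y}) - I(\vecb{X}; \vecb{Z}) \;=\; I(\vecb{X}; \vecb{Y} \mid \vecb{Z}) - I(\vecb{X}; \vecb{Z} \mid \vecb{Y}).
\]

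Next I would translate the conditional independence hypothesis into a statement about conditional mutual information. By definition, $\vecb{X}$ and $\vecb{Z}$ being conditionally independent given $\vecb{Y}$ means that for all values $x,y,z$ in the supports of the respective variables,
\[
\Pr[\vecb{X}=x,\vecb{Z}=z \mid \vecb{Y}=y] \;=\; \Pr[\vecb{X}=x \mid \vecb{Y}=y]\,\Pr[\vecb{Z}=z \mid \vecb{Y}=y].
\]
Plugging this factorization into the definition $I(\vecb{X}; \vecb{Z} \mid \vecb{Y}) = \sum_{x,y,z} \Pr[x,y,z]\,\log\frac{\Pr[x,z\mid y]}{\Pr[x\mid y]\Pr[z\mid y]}$ shows that each summand vanishes, so $I(\vecb{X}; \vecb{Z} \mid \vecb{Y}) = 0$. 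Combining this with the identity above yields $I(\vecb{X}; \vecb{Y}) - I(\vecb{X}; \vecb{Z}) = I(\vecb{X}; \vecb{Y} \mid \vecb{Z})$.

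Finally I would invoke the non-negativity of conditional mutual information, $I(\vecb{X}; \vecb{Y} \mid \vecb{Z}) \geq 0$, which itself follows from the non-negativity of the Kullback--Leibler divergence (applied pointwise in $\vecb{Z}$ and averaged). Substituting this bound into the previous equation gives $I(\vecb{X}; \vecb{Y}) - I(\vecb{X}; \vecb{Z}) \geq 0$, which is exactly the claim.

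There is no serious technical obstacle here, since the ingredients are all standard. The only care required is (i) verifying the chain rule in the form used, which comes from the definition $I(\vecb{X};(\vecb{Y},\vecb{Z})) = H(\vecb{X}) - H(\vecb{X}\mid \vecb{Y},\vecb{Z})$ and repeated use of $H(\vecb{A}, \vecb{B}) = H(\vecb{A}) + H(\vecb{B}\mid \vecb{A})$, and (ii) confirming cleanly that the product-form conditional distribution makes the log-ratio in the definition of $I(\vecb{X};\vecb{Z}\mid \vecb{Y})$ identically zero on the support. Both are routine computations that can be stated in a single line each.
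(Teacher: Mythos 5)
Your proof is correct: the chain-rule decomposition $I(\vecb{X};(\vecb{Y},\vecb{Z})) = I(\vecb{X};\vecb{Y}) + I(\vecb{X};\vecb{Z}\mid\vecb{Y}) = I(\vecb{X};\vecb{Z}) + I(\vecb{X};\vecb{Y}\mid\vecb{Z})$, the vanishing of $I(\vecb{X};\vecb{Z}\mid\vecb{Y})$ under the conditional-independence hypothesis, and the non-negativity of $I(\vecb{X};\vecb{Y}\mid\vecb{Z})$ together give exactly the claimed inequality. The paper states this as a standard fact from information theory without proof, so there is no authorial argument to compare against; yours is the standard textbook derivation and is complete.
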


The key technical ingredient of our proof is the following lemma 
(the notation $\Pp$ is defined in Section \ref{sec:prelim}).
\begin{lemma} \label{lma:infoLowerBoundGeneral}
For any node $i$, the inequality $I(\vecb{E}; \vecb{T}_i) \geq E[|\Pp(\vecb{T}_i)|]$ holds.
\end{lemma}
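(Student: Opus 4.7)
The plan is to expand the mutual information as
\[
I(\vecb{E};\vecb{T}_i) \;=\; H(\vecb{E}) - H(\vecb{E}\mid \vecb{T}_i)
\]
and bound the two terms separately. The first term is immediate: since $G$ is drawn from $G(n,1/2)$, the $|\E|$ coordinates of $\vecb{E}$ are independent uniform bits, so $H(\vecb{E})=|\E|$ exactly.

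For the conditional entropy, the crucial ingredient is the one-sided correctness built into the paper's output convention from Section~\ref{sec:prelim}: every realization of the algorithm satisfies $T_i\subseteq T(G)$, so each triple listed by node $i$ really is a triangle of $G$. Fix any value $T_i$ of $\vecb{T}_i$; then the event $\{\vecb{T}_i=T_i\}$ forces every edge in $\Pp(T_i)$ to take value $1$, so the conditional distribution of $\vecb{E}$ is supported on at most $2^{|\E|-|\Pp(T_i)|}$ vectors of $\{0,1\}^{|\E|}$. The elementary bound $H(X)\le \log_2|\mathrm{supp}(X)|$ then gives $H(\vecb{E}\mid \vecb{T}_i=T_i)\le |\E|-|\Pp(T_i)|$, and averaging over $T_i$ yields $H(\vecb{E}\mid \vecb{T}_i)\le |\E|-E[|\Pp(\vecb{T}_i)|]$. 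Subtracting this from $H(\vecb{E})=|\E|$ delivers the claim.

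I do not foresee a substantial obstacle: the whole argument rests on the single observation that conditioning on $\vecb{T}_i$ deterministically fixes the bits of $\vecb{E}$ indexed by $\Pp(\vecb{T}_i)$, which is guaranteed by the model's one-sided correctness, together with elementary inequalities for entropy. The only small point to watch is that the averaging step must preserve the inequality; this is automatic because the bound $H(\vecb{E}\mid \vecb{T}_i=T_i)\le |\E|-|\Pp(T_i)|$ holds pointwise in $T_i$, and taking expectation with respect to $\vecb{T}_i$ converts $|\Pp(T_i)|$ precisely into $E[|\Pp(\vecb{T}_i)|]$.
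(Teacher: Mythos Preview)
Your proof is correct and follows essentially the same route as the paper: expand $I(\vecb{E};\vecb{T}_i)=H(\vecb{E})-H(\vecb{E}\mid\vecb{T}_i)$, use $H(\vecb{E})=|\E|$, and exploit the one-sided correctness of the output to argue that conditioning on $\vecb{T}_i=R$ fixes the $|\Pp(R)|$ coordinates of $\vecb{E}$ indexed by $\Pp(R)$, leaving at most $|\E|-|\Pp(R)|$ bits of conditional entropy. The only cosmetic difference is that the paper bounds $H(\vecb{E}\mid\vecb{T}_i=R)$ by summing per-edge entropies $H(\vecb{e}_{\{j,k\}}\mid\vecb{T}_i=R)\le 1$ over $\{j,k\}\notin\Pp(R)$, whereas you use the support-size inequality $H(X)\le\log_2|\mathrm{supp}(X)|$ directly; both yield the same pointwise bound and the same conclusion after averaging.
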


The proof of Lemma \ref{lma:infoLowerBoundGeneral} is based on
an information-theoretic argument. The intuition is very simple:  since the information $\vecb{T}_i$ completely reveals the information on the edges forming triangles in $\vecb{T}_i$, it must contain all the information on $\Pp(\vecb{T}_i)$.

We will write $M = |\E| = \frac{n(n - 1)}{2}$ and $N = |\mathcal{T}| = \frac{n(n-1)(n-2)}{6}$.
\begin{proof}[Proof of Lemma~\ref{lma:infoLowerBoundGeneral}]
Since the random variables $\vecb{e}_{\{j,k\}}$ are independent of each other, we have
\begin{align*}
\lefteqn{I(\vecb{E}; \vecb{T}_i)} \\
&= H(\vecb{E}) - H(\vecb{E}|\vecb{T}_i) \\
&= H(\vecb{E}) - \sum_{\{j,k\} \in \mathcal{E}} H(\vecb{e}_{\{j,k\}} | \vecb{T}_i) \\ 
&= H(\vecb{E}) - \sum_{\{j,k\} \in \mathcal{E}} \sum_{R \subseteq \mathcal{T}}
H(\vecb{e}_{\{j,k\}} | \vecb{T}_i = R) \cdot \Pr[\vecb{T}_i = R] \\
&= M - \sum_{R \subseteq \mathcal{T}} \sum_{\{j,k\} \in \mathcal{E}} 
H(\vecb{e}_{\{j,k\}} | \vecb{T}_i = R) \cdot \Pr[\vecb{T}_i = R]\\
&= M - \sum_{R \subseteq \mathcal{T}}\:\: \sum_{\{j,k\} \in\E\setminus\Pp(R)} 
H(\vecb{e}_{\{j,k\}} | \vecb{T}_i = R) \cdot \Pr[\vecb{T}_i = R],
\end{align*}
where the last equality comes from the assumption that Algorithm $\mathcal{A}$ never outputs a triple that is not a triangle (thus $\Pr[\vecb{T}_i = R]\neq 0$ only if $R\subseteq T(G)$, and conditioned to this event we have $\vecb{e}_{\{j,k\}}=1$ with probability $1$ for any $\{j,k\}\in\Pp(R)$). Since $H(\vecb{e}_{\{j,k\}} | \vecb{T}_i = R) \leq 1$ always holds, we conclude that
\begin{align*}
I(\vecb{E}; \vecb{T}_i) &\geq M - \sum_{R \subseteq \mathcal{T}} (M - |\Pp(R)|) \cdot \Pr[\vecb{T}_i = R] \\
&\geq \sum_{R \subseteq \mathcal{T}} |\Pp(R)| \cdot \Pr[\vecb{T}_i = R]  = E[|\Pp(\vecb{T}_i)|],
\end{align*}
as claimed.
\end{proof}

Theorem \ref{thm:lowerBound} is proved by combining Lemma~\ref{lma:triangleEdge} and Lemma~\ref{lma:infoLowerBoundGeneral}. The intuition is again fairly simple: In expectation, an instance of $G(n, 1/2)$ contains $\Omega(n^3)$ triangles, and thus the average number of triangles output per node node is $\Omega(n^2)$. Thus node $w(\vecb{T})$ outputs at least $\Omega(n^2)$ triangles and, from Lemma~\ref{lma:triangleEdge}, we obtain $E[|\Pp(\DD)|] = \Omega(n^{4/3})$. Lemma \ref{lma:infoLowerBoundGeneral} then gives the same lower bound for the mutual information between $\vecb{E}$ and $\DD$, which then gives the lower bound $\Omega(n^{4/3})$ on the amount of communication received by node $w(\vecb{T})$ during the execution of Algorithm $\Aa$. This implies the claimed $\Omega(n^{1/3}/\log n)$-round lower bound for triangle listing since $w(\vecb{T})$ can receive at most $O(n \log n)$ bits of information per round.

\begin{proof}[Proof of Theorem~\ref{thm:lowerBound}]
From Lemma~\ref{lma:infoLowerBoundGeneral}, we have 
\[
I(\vecb{E}; \DD) \geq E[|P(\DD)|].
\] 
By Lemma~\ref{lma:triangleEdge}, for any $R\subseteq T(G)$ the inequality $|\Pp(R)| \geq \frac{\sqrt{2}}{3}|R|^{2/3}$ holds. We thus have
\begin{align*}
\lefteqn{I(\vecb{E}; \DD) \geq E[|\Pp(\DD)|]} \hspace{5mm} \\
&= \sum_{R \subseteq \mathcal{T}} |P(R)| \cdot \Pr[\DD = R] \\
&\geq \sum_{R \subseteq \mathcal{T}, |R| \geq \frac{N}{16n}} 
|\Pp(R)| \cdot \Pr[\DD = R]. \\
&\geq \sum_{R \subseteq \mathcal{T}, |R| \geq \frac{N}{16n}} 
\frac{\sqrt{2}}{3}\left(\frac{N}{16n}\right)^{2/3} \cdot \Pr[\DD = R]\\
&\geq \frac{\sqrt{2}}{3}\left(\frac{N}{16n}\right)^{2/3} \cdot \Pr\left[|\DD| \geq \frac{N}{16n} \right].
\end{align*}
The expected number of triangles in $G(n,1/2)$ is $N/8$ and thus, since there cannot be more than~$N$ triangles, with probability at least $1/15$ the number of triangles exceeds $N/16$.
We conclude that the inequality $|\DD| \geq N/(16n)$ necessarily holds if the number of triangles is at least $N/16$ and Algorithm $\mathcal{A}$ correctly lists all the triangles of the graph, which occurs with overall probability probability at least $1/15 - 1/32$. Consequently
we obtain
\begin{equation}\label{eq0}
I(\vecb{E}; \DD)\geq \frac{\sqrt{2}}{3}\left(\frac{N}{16n}\right)^{2/3} \cdot \left(\frac{1}{15}-\frac{1}{32}\right) =
\Omega(n^{4/3}).
\end{equation}

From Fact~\ref{fact:mutualInfo} we have 
\begin{align}
H(\vecb{\pi}_{w(\vecb{T})}) 
&\geq I(\vecb{E}; \vecb{\pi}_{w(\vecb{T})})\nonumber \\
&\geq 
I(\vecb{E}; (\vecb{\pi}_{w(\vecb{T})}, \vecb{\rho}_{w(\vecb{T})})) 
- I(\vecb{E}; \vecb{\rho}_{w(\vecb{T})}) \nonumber \\
&\geq I(\vecb{E}; (\vecb{\pi}_{w(\vecb{T})}, \vecb{\rho}_{w(\vecb{T})})) 
- H(\vecb{\rho}_{w(\vecb{T})}).\label{eq1c}
\end{align}
Since the initial knowledge of each node $i$ is only the set of edges incident to itself, the inequality \begin{equation}\label{eq2}
H(\vecb{\rho}_{i})\le \sum_{j\neq i} H(\vecb{e}_{i,j}) = n-1
\end{equation}
holds for each node $i\in V$.
Since the output $\vecb{T}_{w(\vecb{T})}$ is computed locally only from the transcript~$\vecb{\pi}_{w(\vecb{T})}$ and the initial state $\vecb{\rho}_{w(\vecb{T})}$, the random variables $\vecb{E}$ and $\vecb{T}_{w(\vecb{T})}$ are conditionally independent given $(\vecb{\rho}_{w(\vecb{T})}, \vecb{\pi}_{w(\vecb{T})})$. We can thus use Fact \ref{fact:DataProcessing} in Inequality (\ref{eq1c}), which combined with Inequalities (\ref{eq0}) and (\ref{eq2}) gives   
\begin{align*}
H(\vecb{\pi}_{w(\vecb{T})}) 
&\geq I(\vecb{E}; \vecb{T}_{w(\vecb{T})}) - H(\vecb{\rho}_{w(\vecb{T})}) = \Omega(n^{4/3}).
\end{align*}
Since $H(\vecb{\pi}_i)$ lower bounds the average length of the transcript $\vecb{\pi}_i$, this implies that node ${w(\vecb{T})}$ receives messages of average total length $\Omega(n^{4/3})$ bits. Since node ${w(\vecb{T})}$ can receive only $O(n\log n)$ bits per round, we conclude that the expected round complexity of Algorithm $\mathcal{A}$ is $\Omega(n^{1/3} / \log n)$.
\end{proof}

The information-theoretic arguments of Lemma~\ref{lma:infoLowerBoundGeneral} can also be used to derive the following stronger lower bound mentioned in the introduction for local triangle listing, the setting where each node $i$ is required to output all the triangles containing $i$.
\begin{proposition}\label{cor}
Let $\mathcal{A}$ be any local triangle listing algorithm with error probability less than $1/32$. Then there exists a probability distribution on inputs such that the expected round complexity of $\mathcal{A}$ is $\Omega(n/\log n)$. 
\end{proposition}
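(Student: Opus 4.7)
The plan is to reuse the information-theoretic machinery of Lemma~\ref{lma:infoLowerBoundGeneral} and the closing part of the proof of Theorem~\ref{thm:lowerBound}, but applied to every node individually rather than only to the ``heaviest'' node $w(\vecb{T})$. The crucial new observation is that in the \emph{local} listing setting the $|T|^{2/3}$ edge-counting bound of Lemma~\ref{lma:triangleEdge} can be replaced by a linear bound: if $\vecb{T}_i$ is the set of triangles containing node~$i$, then distinct triangles in $\vecb{T}_i$ produce distinct ``opposite'' edges (the unique edge of each such triangle not incident to~$i$), so $|\Pp(\vecb{T}_i)|\ge |\vecb{T}_i|$ holds deterministically, with no $|\vecb{T}_i|^{2/3}$ loss.

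As in Theorem~\ref{thm:lowerBound}, I would take $G(n,1/2)$ as the input distribution. A fixed node~$i$ belongs to $\binom{n-1}{2}/8=\Theta(n^2)$ triangles in expectation, and a ``reverse Markov'' argument using the deterministic upper bound $\binom{n-1}{2}$ on this count shows that with probability bounded below by a constant it is at least a constant multiple of~$n^2$. Combining this with the assumption that $\mathcal{A}$ errs with probability below $1/32$ (so that, on the good event, $\vecb{T}_i$ equals the full set of triangles through~$i$), I obtain $E[|\vecb{T}_i|]=\Omega(n^2)$, whence $E[|\Pp(\vecb{T}_i)|]=\Omega(n^2)$, and finally $I(\vecb{E};\vecb{T}_i)=\Omega(n^2)$ by Lemma~\ref{lma:infoLowerBoundGeneral}.

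From here I would simply mirror the final chain of inequalities in the proof of Theorem~\ref{thm:lowerBound}, with the fixed node~$i$ replacing $w(\vecb{T})$. Using Fact~\ref{fact:mutualInfo}, the data-processing inequality (Fact~\ref{fact:DataProcessing})---applicable because $\vecb{T}_i$ is a deterministic function of $(\vecb{\rho}_i,\vecb{\pi}_i)$ together with $i$'s private coins---and the bound $H(\vecb{\rho}_i)\le n-1$ (the initial state of~$i$ depends only on the $n-1$ edges incident to it), one gets
\[
H(\vecb{\pi}_i)\ge I(\vecb{E};(\vecb{\rho}_i,\vecb{\pi}_i))-H(\vecb{\rho}_i)\ge I(\vecb{E};\vecb{T}_i)-H(\vecb{\rho}_i)=\Omega(n^2)-O(n)=\Omega(n^2).
\]
Since $H(\vecb{\pi}_i)$ lower-bounds the expected total length of the transcript received by node~$i$, and since in the CONGEST clique model node~$i$ receives at most $O(n\log n)$ bits per round (one $O(\log n)$-bit message from each of the $n-1$ other nodes), the expected round complexity of $\mathcal{A}$ must be $\Omega(n/\log n)$.

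The only genuinely new ingredient beyond the proof of Theorem~\ref{thm:lowerBound} is the linear edge-counting bound $|\Pp(\vecb{T}_i)|\ge |\vecb{T}_i|$, which replaces Lemma~\ref{lma:triangleEdge} and is exactly what accounts for the $n^{2/3}$ improvement over the global bound. The main pitfall I anticipate is carefully combining the randomness of~$\mathcal{A}$ with the randomness of~$G(n,1/2)$ when deducing $E[|\vecb{T}_i|]=\Omega(n^2)$ from the one-sided error guarantee; once that is handled, the remainder is a direct transcription of the earlier argument with ``the heaviest node $w(\vecb{T})$'' replaced by an arbitrary fixed node.
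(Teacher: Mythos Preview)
Your proposal is correct and follows essentially the same approach as the paper's proof: both replace Lemma~\ref{lma:triangleEdge} by the observation that triangles through a fixed node~$i$ have pairwise distinct opposite edges, yielding $I(\vecb{E};\vecb{T}_i)=\Omega(n^2)$ via Lemma~\ref{lma:infoLowerBoundGeneral}, and then reuse verbatim the closing chain of inequalities from Theorem~\ref{thm:lowerBound}. The only cosmetic difference is that the paper (somewhat unnecessarily) still routes the final step through $w(\vecb{T})$ rather than an arbitrary fixed~$i$, and phrases the edge-counting bound as ``$|\Pp(R)|$ is at least the number of triangles in $R$ containing~$i$'' rather than your $|\Pp(\vecb{T}_i)|\ge|\vecb{T}_i|$; the latter formulation is safe only once you note that it suffices to count the triangles through~$i$ in $\vecb{T}_i$, since the local-listing requirement does not forbid $\vecb{T}_i$ from containing additional triangles.
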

The proof of Proposition \ref{cor} is similar to the proof of Theorem \ref{thm:lowerBound} but does not even requires Lemma~\ref{lma:triangleEdge}: we can immediately obtain the stronger bound $E[|\Pp(\vecb{T}_i)|] = \Omega(n^{2})$ since for local triangle listing $\vecb{T}_i$ should include all the triangles containing $i$.

\begin{proof}[Proof of Proposition~\ref{cor}]
The proof is almost similar to the proof of Theorem~\ref{thm:lowerBound}, except that we can derive the stronger lower bound
\begin{equation}\label{eq4}
I(\vecb{E}; \vecb{T}_i) =
\Omega(n^{2})
\hspace{2mm}
\textrm{ for any $i\in V$},
\end{equation}
instead of the lower bound (\ref{eq0}). This new lower bound implies that $H(\vecb{\pi}_{w(\vecb{T})})=\Omega(n^2)$ and gives the claimed lower bound on the round complexity of the local triangle listing algorithm $\Aa$.

We now explain how to derive the lower bound (\ref{eq4}). From Lemma \ref{lma:infoLowerBoundGeneral} we get
\begin{align*}
\lefteqn{I(\vecb{E};\vecb{T}_i) \geq E[|\Pp( \vecb{T}_i)|]} \\
 &= \sum_{R \subseteq \mathcal{T}} |\Pp(R)| \cdot \Pr[ \vecb{T}_i = R] \\
&\geq \frac{M}{16} 
\cdot \Pr\big[ \vecb{T}_i \textrm{ includes at least $M/6$ triangles containing $i$} \big], 
\end{align*}
since the quantity $|\Pp(R)|$ is lower bounded by the number of triangles in $R$ containing node $i$. The expected number of triangles in $G(n,1/2)$ containing any fixed node is $M/8$ and thus, since there cannot be more than~$M$ such triangles, with probability at least $1/15$ the number of triangles containing node $i$ exceeds $M/16$.  Since in Algorithm $\mathcal{A}$ node $i$ correctly lists all the triangles containing~$i$ with probability at least $1-1/32$, we conclude that $\vecb{T}_i$ includes at least $M/6$ triangles including~$i$ with overall probability at least $1/15 - 1/32$. We thus obtain 
\begin{align*}
I(\vecb{E};\vecb{T}_i)&\geq \frac{M}{16} 
\cdot \left(\frac{1}{15}-\frac{1}{32}\right) =
\Omega(n^2),
\end{align*}
as claimed
\end{proof}

\section*{Acknowledgments}
  FLG was partially supported by MEXT KAKENHI (24106009) and JSPS KAKENHI (15H01677, 16H01705, 16H05853). TI was partially supported by JSPS KAKENHI (15H00852, 16H02878) and JST-SICORP Japan-Israel Bilateral Program "ICT for a resilient society".


\begin{thebibliography}{10}

\bibitem{ABI86}
Noga Alon, Laszlo Babai, and Alon Itai.
\newblock A fast and simple parallel algorithm for the maximal independent set
  problem.
\newblock {\em Journal of Algorithms}, 7(4):567--583, 1986.

\bibitem{Bansal+ToC12}
Nikhil Bansal and Ryan Williams.
\newblock Regularity lemmas and combinatorial algorithms.
\newblock {\em Theory of Computing}, 8(1):69--94, 2012.

\bibitem{BE13}
Leonid Barenboim and Michael Elkin.
\newblock {\em Distributed Graph Coloring: Fundamentals and Recent
  Developments}.
\newblock Morgan {\&} Claypool, 2013.

\bibitem{BEPS16}
Leonid Barenboim, Michael Elkin, Seth Pettie, and Johannes Schneider.
\newblock The locality of distributed symmetry breaking.
\newblock {\em Journal of the ACM}, 63(3):20:1--20:45, 2016.

\bibitem{Censor-Hillel+DISC16}
Keren Censor{-}Hillel, Eldar Fischer, Gregory Schwartzman, and Yadu Vasudev.
\newblock Fast distributed algorithms for testing graph properties.
\newblock In {\em Proceedings of the International Symposium on Distributed
  Computing (DISC)}, pages 43--56, 2016.

\bibitem{CKKLPS15}
Keren Censor-Hillel, Petteri Kaski, Janne~H. Korhonen, Christoph Lenzen, Ami
  Paz, and Jukka Suomela.
\newblock Algebraic methods in the congested clique.
\newblock In {\em Proceedings of the ACM Symposium on Principles of Distributed
  Computing (PODC)}, pages 143--152, 2015.

\bibitem{Czumaj+SICOMP09}
Artur Czumaj and Andrzej Lingas.
\newblock Finding a heaviest vertex-weighted triangle is not harder than matrix
  multiplication.
\newblock {\em SIAM Journal on Computing}, 39(2):431--444, 2009.

\bibitem{DLP12}
Danny Dolev, Christoph Lenzen, and Shir Peled.
\newblock ``{Tri}, tri again": Finding triangles and small subgraphs in a
  distributed setting.
\newblock In {\em Proceedings of the International Symposium on Distributed
  Computing (DISC)}, pages 195--209, 2012.

\bibitem{DKO14}
Andrew Drucker, Fabian Kuhn, and Rotem Oshman.
\newblock On the power of the congested clique model.
\newblock In {\em Proceedings of the ACM Symposium on Principles of Distributed
  Computing (PODC)}, pages 367--376, 2014.

\bibitem{Elkin04}
Michael Elkin.
\newblock A faster distributed protocol for constructing a minimum spanning
  tree.
\newblock In {\em Proceedings of the ACM-SIAM Symposium on Discrete Algorithms
  (SODA)}, pages 359--368, 2004.

\bibitem{Elkin06}
Michael Elkin.
\newblock An unconditional lower bound on the time-approximation trade-off for
  the distributed minimum spanning tree problem.
\newblock {\em SIAM Journal on Computing}, 36(2):433--456, 2006.

\bibitem{Fraigniaud+DISC16}
Pierre Fraigniaud, Ivan Rapaport, Ville Salo, and Ioan Todinca.
\newblock Distributed testing of excluded subgraphs.
\newblock In {\em Proceedings of the International Symposium on Distributed
  Computing (DISC)}, pages 342--356, 2016.

\bibitem{FHW12}
Silvio Frischknecht, Stephan Holzer, and Roger Wattenhofer.
\newblock Networks cannot compute their diameter in sublinear time.
\newblock In {\em Proceedings of the ACM-SIAM Symposium on Discrete Algorithms
  (SODA)}, pages 1150--1162, 2012.

\bibitem{GHS83}
Robert~G. Gallager, Pierre~A. Humblet, and Philip~M. Spira.
\newblock A distributed algorithm for minimum-weight spanning trees.
\newblock {\em ACM Transactions on Programming Languages and Systems (TOPLAS)},
  5(1):66--77, 1983.

\bibitem{GKP93}
J.A. Garay, S.~Kutten, and D.~Peleg.
\newblock A sub-linear time distributed algorithm for minimum-weight spanning
  trees.
\newblock In {\em Proceedings of the IEEE Symposium on Foundations of Computer
  Science (FOCS)}, pages 659--668, 1993.

\bibitem{GKKLP15}
Mohsen Ghaffari, Andreas Karrenbauer, Fabian Kuhn, Christoph Lenzen, and Boaz
  Patt-Shamir.
\newblock Near-optimal distributed maximum flow: Extended abstract.
\newblock In {\em Proceedings of the ACM Symposium on Principles of Distributed
  Computing (PODC)}, pages 81--90, 2015.

\bibitem{GK13}
Mohsen Ghaffari and Fabian Kuhn.
\newblock Distributed minimum cut approximation.
\newblock In {\em Proceedings of the International Symposium on Distributed
  Computing (DISC)}, pages 1--15, 2013.

\bibitem{Hirvonen+14}
Juho Hirvonen, Joel Rybicki, Stefan Schmid, and Jukka Suomela.
\newblock Large cuts with local algorithms on triangle-free graphs.
\newblock {\em CoRR}, abs/1402.2543, 2014.

\bibitem{HW12}
Stephan Holzer and Roger Wattenhofer.
\newblock Optimal distributed all pairs shortest paths and applications.
\newblock In {\em Proceedings of the ACM Symposium on Principles of Distributed
  Computing (PODC)}, pages 355--364, 2012.

\bibitem{Itai+SICOMP78}
Alon Itai and Michael Rodeh.
\newblock Finding a minimum circuit in a graph.
\newblock {\em SIAM Journal on Computing}, 7(4):413--423, 1978.

\bibitem{KMW16}
Fabian Kuhn, Thomas Moscibroda, and Roger Wattenhofer.
\newblock Local computation: Lower and upper bounds.
\newblock {\em Journal of the ACM}, 63(2):17:1--17:44, 2016.

\bibitem{FR05}
Fabian Kuhn and Roger Wattenhofer.
\newblock Constant-time distributed dominating set approximation.
\newblock {\em Distributed Computing}, 17(4):2005, 303--310.

\bibitem{LP15}
Christoph Lenzen and Boaz Patt-Shamir.
\newblock Fast partial distance estimation and applications.
\newblock In {\em Proceedings of the ACM Symposium on Principles of Distributed
  Computing (PODC)}, pages 153--162, 2015.

\bibitem{LP13}
Christoph Lenzen and David Peleg.
\newblock Efficient distributed source detection with limited bandwidth.
\newblock In {\em Proceedings of the ACM Symposium on Principles of Distributed
  Computing (PODC)}, pages 375--382, 2013.

\bibitem{Linial92}
Nathan Linial.
\newblock Locality in distrubuted graph algorithms.
\newblock {\em SIAM Journal on Computing}, 21(1):193--201, 1992.

\bibitem{LPPP05}
Zvi Lotker, Boaz Patt-Shamir, Elan Pavlov, and David Peleg.
\newblock Minimum-weight spanning tree construction in {$O(\log\log n)$}
  communication rounds.
\newblock {\em SIAM Journal on Computing}, 35(1):120--131, 2005.

\bibitem{Nanongkai14}
Danupon Nanongkai.
\newblock Distributed approximation algorithms for weighted shortest paths.
\newblock In {\em Proceedings of the ACM Symposium on Theory of Computing
  (STOC)}, pages 565--573, 2014.

\bibitem{NS14}
Danupon Nanongkai and Hsin-Hao Su.
\newblock Almost-tight distributed minimum cut algorithms.
\newblock In {\em Proceedings of the International Symposium on Distributed
  Computing (DISC)}, pages 439--453, 2014.

\bibitem{Pandurangan+16}
Gopal Pandurangan, Peter Robinson, and Michele Scquizzato.
\newblock Tight bounds for distributed graph computations.
\newblock {\em CoRR}, abs/1602.08481, 2016.

\bibitem{PatrascuSTOC10}
Mihai Patrascu.
\newblock Towards polynomial lower bounds for dynamic problems.
\newblock In {\em Proceedings of the ACM Symposium on Theory of Computing
  (STOC)}, pages 603--610, 2010.

\bibitem{Peleg00}
David Peleg.
\newblock {\em Distributed computing: a locality-sensitive approach}.
\newblock Society for Industrial and Applied Mathematics, 2000.

\bibitem{PR99}
David Peleg and Vitaly Rubinovich.
\newblock A near-tight lower bound on the time complexity of distributed
  minimum-weight spanning tree construction.
\newblock {\em SIAM Journal on Computing}, 30(5):1427--1442, 2000.

\bibitem{Pettie+15}
Seth Pettie and Hsin{-}Hao Su.
\newblock Distributed coloring algorithms for triangle-free graphs.
\newblock {\em Information and Computation}, 243:263--280, 2015.

\bibitem{Rivin02}
Igor Rivin.
\newblock Counting cycles and finite dimensional {$L^p$} norms.
\newblock {\em Advances in Applied Mathematics}, 29(4):647 -- 662, 2002.

\bibitem{SHKKANPPW12}
Atish~Das Sarma, Stephan Holzer, Liah Kor, Amos Korman, Danupon Nanongkai,
  Gopal Pandurangan, David Peleg, and Roger Wattenhofer.
\newblock Distributed verification and hardness of distributed approximation.
\newblock {\em SIAM Journal on Computing}, 41(5):1235--1265, 2012.

\bibitem{Williams+FOCS10}
Virginia {Vassilevska Williams} and Ryan Williams.
\newblock Subcubic equivalences between path, matrix and triangle problems.
\newblock In {\em Proceedings of the IEEE Symposium on Foundations of Computer
  Science (FOCS)}, pages 645--654, 2010.

\bibitem{Vassilevska+SICOMP13}
Virginia {Vassilevska Williams} and Ryan Williams.
\newblock Finding, minimizing, and counting weighted subgraphs.
\newblock {\em SIAM Journal on Computing}, 42(3):831--854, 2013.

\bibitem{WC81}
Mark~N. Wegman and J.~Lawrence Carter.
\newblock New hash functions and their use in authentication and set equality.
\newblock {\em Journal of Computer and System Sciences}, 22(3):265 -- 279,
  1981.

\end{thebibliography}

\end{document}